\newtheorem{thm}{Theorem}
\newtheorem{lem}{Lemma}
\newtheorem{defn}{Definition}
\newtheorem{rem}{Remark}
\begin{document}

\title{An Architecture for Distributed Energies Trading in Byzantine-Based Blockchain}

\author{Jianxiong Guo,
	Xingjian Ding,
	Weili Wu,~\IEEEmembership{Member,~IEEE}
	\thanks{J. Guo and W. Wu are with the Department
		of Computer Science, Erik Jonsson School of Engineering and Computer Science, Univerity of Texas at Dallas, Richardson, TX, 75080 USA; X. Ding is with the School of Information, Renmin University of China, Beijing, CHN
		
		E-mail: jianxiong.guo@utdallas.edu}% <-this 
	\thanks{Manuscript received April 19, 2005; revised August 26, 2015.}}

\markboth{Journal of \LaTeX\ Class Files,~Vol.~14, No.~8, August~2015}%
{Shell \MakeLowercase{\textit{et al.}}: Bare Demo of IEEEtran.cls for IEEE Journals}

\maketitle

\begin{abstract}
	With the development of smart cities, not only are all corners of the city connected to each other, but also connected from city to city. They form a large distributed network together, which can facilitate the integration of distributed energy station (DES) and corresponding smart aggregators. Nevertheless, because of potential security and privacy protection arisen from trustless energies trading, how to make such energies trading goes smoothly is a tricky challenge. In this paper, we propose a blockchain-based multiple energies trading (B-MET) system for secure and efficient energies trading by executing a smart contract we design. Because energies trading requires the blockchain in B-MET system to have high throughput and low latency, we design a new byzantine-based consensus mechanism (BCM) based on node's credit to improve efficiency for the consortium blockchain under the B-MET system. Then, we take combined heat and power (CHP) system as a typical example that provides distributed energies. We quantify their utilities, and model the interactions between aggregators and DESs in a smart city by a novel multi-leader multi-follower Stackelberg game. It is analyzed and solved by reaching Nash equilibrium between aggregators, which reflects the competition between aggregators to purchase energies from DESs. In the end, we conduct plenty of numerical simulations to evaluate and verify our proposed model and algorithms, which demonstrate their correctness and efficiency completely.
\end{abstract}

\begin{IEEEkeywords}
	Distributed energies trading, Smart city, Consortium blockchain, Byzantine consensus, Stackelberg game.
\end{IEEEkeywords}

\IEEEpeerreviewmaketitle

\section{Introduction}
\IEEEPARstart{T}{he} deployment of distributed energy stations (DESs) based on the internet built by the development of smart cities has been a hot topic because of its great potential to reduce the consumption of fossil fuels and curb greenhouse gas emission \cite{georgilakis2013optimal}. Consider a smart community equipped with a DES, it is used to supply residents in this community with multiple energies, such as electricity and heat. DES existing in the community can reduce residents' dependence on the centralized supply of energies, such as electricity from power grid and heat from heat station, thus save resources and reduce the cost of using energies. Moreover, it can sell surplus electricity and heat to the aggregators of power grid and heat station for making revenue. DESs can trade their surplus energies with aggregators that are responsible for collecting energies from their communities in a peer-to-peer (P2P) manner, thereby the multiple energies trading problem discussed in this paper is formulated.

Traditional P2P energies trading is performed on the centralized energy management platform, however, such a mechanism has many drawbacks. Traders often worry that their payment security and privacy protection when trading in an untrusted and opaque third centralized platform. This intermediary needs to verify and manage transactions between aggregators and DESs. If troubled by some damages such as single point of failure, it will lead to privacy leakage and transaction loss \cite{aitzhan2016security}. Thus, it is urgent to create a secure energies trading system to guarantee trading among the distributed internet of energy can be executed effectively. It encourages the DESs to sell their energies to aggregators without worry, which promotes the rational use of energies.

Blockchain is a public and distributed database that is designed to store verified transactions among all valid participants without a trusted intermediary. Here, a new transaction is required to be validated by a group of authorized participants, and then it can be added into the blockchain in a permanent and tamper-resistant manner. It can be used to construct a secure and reliable energies trading system because of its decentralization, security, and anonymity \cite{li2018crowdbc} \cite{jiao2019auction}. Consider a smart city, it consists of a number of communities, each of which is equipped with a DES. There are two aggregators, electricity aggregator (EA) and heat aggregator  (HA), trading with DESs in this city. The aggregators of different cities are interconnected to form a wide area network. Based on that, we propose a blockchain-based multiple energies trading (B-MET) system, where all aggregators are authorized participants required to store the blockchain and complete the consensus process. Thus this is a consortium blockchain as well, which is a little different from the classical public blockchain used in Bitcoin and Ethereum. Here, consortium blockchain is more convenient and flexible to achieve trading functions.

Based on such an architecture, we design a smart contract that ensures energies trading to be performed automatically when the trading conditions are satisfied. However, the proof-based consensus mechanism such as proof-of-work that is adopted by the most of blockchain applications is not suitable to our consortium blockchain in B-MET system because of its high latency, low throughput, and demanding computing power requirement. To finish the task of energies trading, it needs low latency and high throughput consensus mechanism. Thereby we design a new byzantine-based consensus mechanism (BCM) based on node's credit, which reflects the performance of this node in the previous experience of participating in consensus. After each round of consensus, each node's credit should be updated according to its voting result. If its voting is consistent with the result of consensus, its credit will be increased; otherwise will be decreased. Their credits affect directly their probabilities of being chosen as the leader and voting weight in the next round. This not only motivates participants to make the right decision, but also speeds up the consensus process.

In the aforementioned contract, there is an interaction between aggregator and DES before initiate a new energies trading, where the aggregator offers a unit price to purchase a kind of energy from DES, then DES decides the amount of energy they are willing to sell. In this paper, we take combined heat and power (CHP) system as an instance of DES, and aggregators are EA and HA. In a smart city, for each DES in this city, its utility consists of two parts: one is to serve the residents living in the community for satisfying their Daily consumption, and the other is sold to the aggregators for gaining revenues. For the aggregators in this city, their gains come from buying energies from DESs at a lower price and selling them at the retail price. To motivate the DESs to sell more energies, the aggregators should offer a higher price to them, but doing so raises costs and may result in lower overall profits. Since the multilevel decision-making processes between aggregators and DESs in a city, we formulate a novel multi-leader multi-follower (MLMF) Stackelberg game to model this bargain between them. Here the aggregators are leaders and DESs are followers. Their goals are to maximize their utilities or profits respectively. This MLMF Stackelberg game is analyzed and solved thoroughly in this paper, and we prove the Nash equilibrium (NE) among aggregators exists and is unique. Because the DESs are always able to respond aggregators with the optimal strategy according to their offered prices, the Stackelberg equilibrium (SE) exists and is unique as well. We propose a distributed algorithm that is guaranteed to reach the unique SE by limited information interactions. Finally, we conduct extensive numerical simulations to test the B-MET system, verify the correctness of our proposed utility functions and feasibility of our proposed algorithm.

The rest of this paper is organized as follows: Sec. \uppercase\expandafter{\romannumeral2} discusses the-state-of-art work. Sec. \uppercase\expandafter{\romannumeral3} introduces the architecture of B-MET system, describes CHP system, and defines utility functions. Sec. \uppercase\expandafter{\romannumeral4} presents smart contract and byzantine-based blockchain. Sec. \uppercase\expandafter{\romannumeral5} introduces the Stackelberg game and discuess the solving process. Sec. \uppercase\expandafter{\romannumeral6} conducts numerical simulations. Sec. \uppercase\expandafter{\romannumeral7} is conclusion.

\section{Related Work}
	Distributed energy systems have been applied widely in many different forms, such as DES \cite{wu2006combined} and vehicle-to-grid \cite{su2018secure} \cite{zhou2019secure}, to curb greenhouse gas and save cost. Integrating DESs into a smart grid \cite{vasquez2010hierarchical} has attracted more and more researchers to participate in recently. This rouse the problems of energy management and energy trading problem. Cecati \textit{et al.} \cite{cecati2011smart} exploited DES to make the cost of power delivery	minimized by use of an efficient smart grid management system. Georgilakis \textit{et al.} \cite{georgilakis2013optimal} summarized the optimally distributed generation placement problem systematically, classified and analyzed current and future research about it. Zhang \textit{et al.} \cite{zhang2013efficient} considered microgrid as a local energy supplier for domestic buildings by utilizing DES, and studied optimal scheduling of energy consumption through mixed-integer programming. However, they only focused on electricity trading between grid and DESs, in this paper we consider multiple energies trading due to the diversity of energy forms.
	
	In P2P energy trading, blockchain technology has been introduced to address transaction security issues. Kang \textit{et al.} \cite{kang2017enabling} put forward a localized P2P electricity trading pattern based on consortium blockchain among plug-in hybrid electric vehicles. Li \textit{et al.} \cite{li2017consortium} proposed a P2P energy trading architecture based on consortium blockchain for the industrial internet of things relied on a credit-based payment scheme. zhou \textit{et al.} \cite{zhou2019secure} considered the scenario of vehicle-to-grid, and developed a secure energy trading mechanism based on consortium blockchain. Guo \textit{et al.} \cite{guo2020combined} studied a blockchain-based energy management system that guarantees secure electricity trading between grid and DESs. However, they lose sight of low throughput and high latency in their proof-based consensus process, in this paper we try to address it by proposing a new byzantine-based consensus mechanism.
	
	Stackelberg game is an effective tool to model the interactions in energies trading. Maharjan \textit{et al.} \cite{maharjan2013dependable} studied the demand response management by establishing a Stackelberg game between multiple utility companies and customers to maximized their utilities respectively. Bu \textit{et al.} \cite{bu2013game} proposed a four-stage Stackelberg game to consider a real-time pricing problem for the electricity retailer in the demand-side management. Yao \textit{et al.} \cite{yao2019resource} modeled the interactions between cloud server and mines by Stackelberg game, and solved it by multiagent reinforcement learning algorithm. Chen \textit{et al.} \cite{chen2020stackelberg} proposed a Stackelberg game-based framework to simulate the multiple resources allocation between cloud server and end users, and found an equilibrium solution by a backward induction process. However, most of these models have only one leader, in this paper the interactions between aggregators and DESs are MLMF, more complex and realistic.  

\section{System Architecture}
	Consider a smart city, it consists of a number of disjoint smart communities, each of which is equipped with a distributed energy station (DES) responsible for supplying multiple energies, such as electricity and heat, to these residents living in this community. In this city, there are several aggregators, which represent different companies respectively, collecting different kinds of energies from all DESs appertained to this city. The architecture of blockchain-based multiple energies trading (B-MET) system is shown in Fig. \ref{fig1}. In the B-MET system, given a smart city ${\rm S}_i$, the entities in this smart city can be shown as follows:
	
	\begin{figure}[!t]
		\centering
		\includegraphics[width=\linewidth]{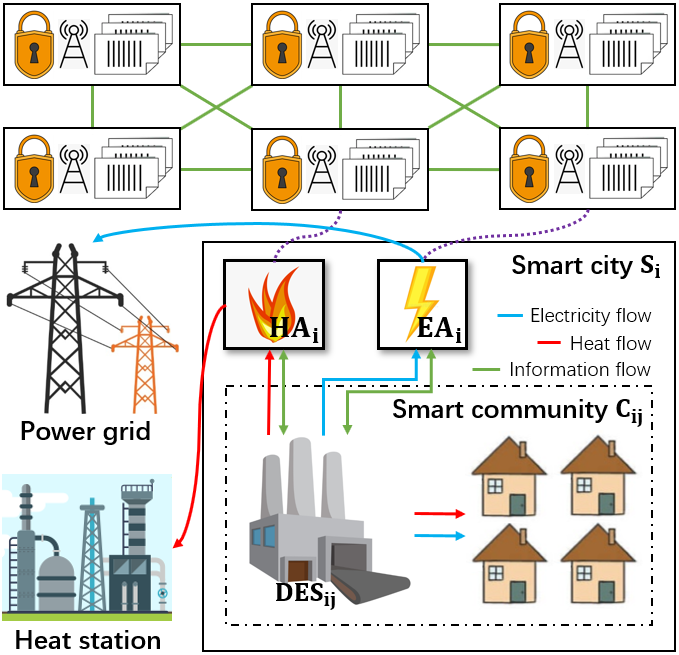}
		\caption{The architecture of blockchain-based multiple energies trading system.}
		\label{fig1}
	\end{figure}
	
	\textit{1) Aggregators:} There are two aggregators, electricity aggregator (${\rm EA}_i$) and heat aggregator (${\rm HA}_i$), associated with this smart city ${\rm S}_i$. The ${\rm EA}_i$ (resp. ${\rm HA}_i$) is delegated by power grid (resp. heat station) as a monopoly of the energy market. They purchase electric energy (resp. heat energy) generated by DESs in those communities that belong to this smart city.
	
	\textit{2) DESs:} The city ${\rm S}_i$ can be partitioned into an uncertain number of disjoint smart communities, denoted by set $\{{\rm C}_{i1},{\rm C}_{i2},\cdots,{\rm C}_{ij},\cdots\}$. In community ${\rm C}_{ij}$, there is a distributed energy station ${\rm DES}_{ij}$ supplying electricity and heat to the residents living in this community. Besides, ${\rm DES}_{ij}$ is able to sell surplus electric energy (resp. heat energy) to the corresponding ${\rm EA}_i$ (resp. ${\rm HA}_i$) in order to make revenues.
	
	\textit{3) Smart meters:} It is a built-in component installed in each aggregator that monitors the energy flow transferred by each DES in this city in real-time, and decide whether the transaction has been accomplished.

	Then, consider a larger ecosystem, such as a country, it is composed of a number of smart cities. This ecosystem $\mathbb{S}$ can be denoted by $\mathbb{S}=\{{\rm S}_1,{\rm S}_2,\cdots,{\rm S}_i,\cdots\}$. Here, each ${\rm S}_i\in\mathbb{S}$ is a smart city in this ecosystem, and ${\rm S}_i=\{\{{\rm EA}_i,{\rm HA}_i\},\{{\rm C}_{i1},{\rm C}_{i2},\cdots,{\rm C}_{ij},\cdots\}\}$. For convenience, the notation ${\rm DES}_{ij}$ can be considered equivalent to ${\rm C}_{ij}$. Our B-MET system is established on such an ecosystem, in which all aggregators, including EAs and HAs, are interconnected each other to form a peer-to-peer (P2P) network called ``blockchain network'', shown in the upper half of Fig. \ref{fig1}. In order to support secure energy trading between aggregators and DESs, we adopt consortium blockchain to construct our B-MET. In traditional blockchain, the consensus process is carried out by all participants. But the blockchain in the B-MET takes all aggregators in the ecosystem as authorized participants, and they are charged with storing the whole blockchain and performing the consensus process. Each aggregator manages and records those transactions between it and DESs in its city. The transactions are packaged into blocks and added into blockchain when the consensus among aggregators is reached, thus stored in all aggregators permanently.
	
\subsection{Combined Heat and Power System}
	Here, the aforementioned DES is implemented by the combined heat and power (CHP) system. The CHP system consumes natural gas to generate electricity and heat that serve its community or sell to the aggregators of its corresponding city, shown in Fig. \ref{fig2}. The gas is fed into the gas turbine (GT) which will generate electricity $E_g$ and emit high-temperature waste heat $Q_w$. The heat $Q_w$ can be recovered by heat recovery system that can generate heat $Q_r$. Here, $E_{use}$ (resp. $Q_{use}$) is used to supply electricity (resp. heat) to community, and $E_{exc}$ (resp. $Q_{exc}$) is sold to EA (resp. HA).
	
	\begin{figure}[!t]
		\centering
		\includegraphics[width=\linewidth]{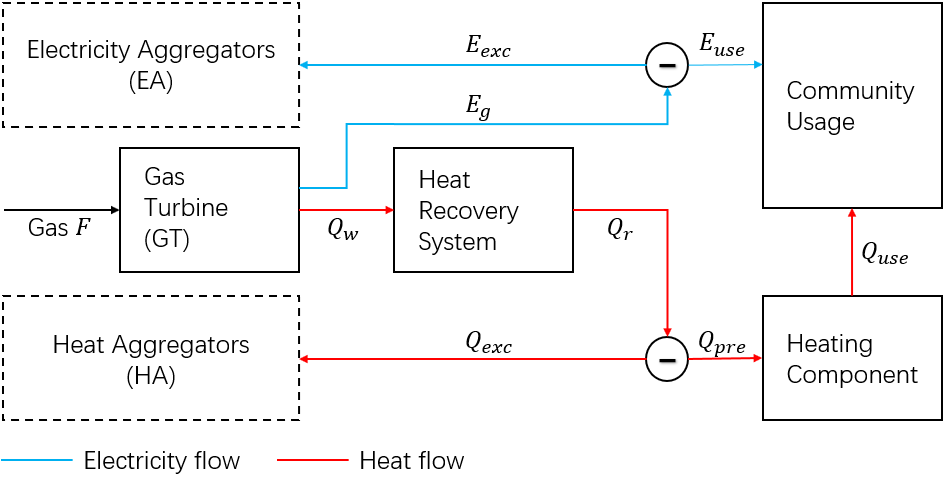}
		\caption{The structure of combined heat and power (CHP) system.}
		\label{fig2}
	\end{figure}

	Shown as Fig. \ref{fig2}, the total electricity generated by GT is $E_{g}=E_{use}+E_{exc}$. Measured in days, the units of quantities denoted by $E$ and $Q$ are $({\rm J}/{\rm day})$. The gas consumption per day $F$ $({\rm m}^3/{\rm day})$ can be defined as
	\begin{equation}
		F=E_g/(q\cdot\eta_g)=Q_w/(q\cdot(1-\eta_g))
	\end{equation}
	where $q$ $({\rm J}/{\rm m}^3)$ is the calorific value of natural gas, thereby the total energy generated by $F$ is $q\cdot F$ definitely. The $\eta_g$ is the electric conversion of GT, percentage energy that transferred to electricity. Given a specific GT, its electric conversion can be considered as a constant. Besides, let $\eta_r$ be the thermal efficiency of heat recovery system, and $\eta_h=1$ be the thermal efficiency of heat component. We have $Q_r=Q_w\cdot\eta_r=Q_{pre}+Q_{exc}$ and $Q_{use}=Q_{pre}$ respectively.
	
	As mentioned above, for a given CHP system, the electricity (resp. heat) generated by it can be divided into two parts: one is used to serve local residents, another is sold to EA (resp. HA). Thus, we define two dispatching factor $\alpha,\beta\in[0,1]$ for this CHP, where $\alpha=E_{use}/E_g$ is the electric dispatching factor and $\beta=Q_{pre}/Q_r$ is the heat dispatching factor. This DES needs to buy natural gas from the gas company. The company is for profit, thus it is valid to assume the gas company always supply enough gas that is able to meet the DES's requirement. Given a smart city ${\rm S}_i$ and a smart community ${\rm C}_{ij}\in{\rm S}_i$, the energy relationships in the CHP system of ${\rm C}_{ij}$ has been obtained, that is
	\begin{flalign}
	E_{use}^{ij}&=\alpha^{ij}\cdot\eta_g\cdot(qF^{ij})\\
	E_{exc}^{ij}&=(1-\alpha^{ij})\cdot\eta_g\cdot(qF^{ij})\\
	Q_{use}^{ij}&=\beta^{ij}\cdot(1-\eta_g)\cdot\eta_r\cdot(qF^{ij})\\
	Q_{exc}^{ij}&=(1-\beta^{ij})\cdot(1-\eta_g)\cdot\eta_r\cdot(qF^{ij})
	\end{flalign}

	In this model, we assume all the CHP equipments in this ecosystem $\mathbb{S}$ has the same efficiency parameters $\eta_g$ and $\eta_r$. Each ${\rm CHP}_{ij}$ can determine the amount of electricity (resp. heat) that can be sold to ${\rm EA}_i$ (resp. ${\rm HA}_i$) by adjusting its dispatching factor $\alpha^{ij}$ (resp. $\beta^{ij}$) automonously. For example, when $\alpha^{ij}$ is one, it means that ${\rm CHP}_{ij}$ will not sell any electricity to ${\rm EA}_i$ for making revenue.
	
\subsection{Utility Functions}
	Consider a smart city ${\rm S}_i$, the ${\rm EA}_i$ (resp. ${\rm HA}_i$) offers a unit price $p_e^i$ (resp. $p_h^i$) to collect surplus electricity (resp. heat) generated by ${\rm DES}_{ij}\in{\rm S}_i$, where the units of $p_e^i$ and $p_h^i$ are ${\rm coin}/{\rm J}$. For each ${\rm DES}_{ij}\in{\rm S}_i$, it is a risk-averse agent in the energy market. If ${\rm DES}_{ij}$ chooses dispatching factor $\alpha^{ij}$, $\beta^{ij}$, and consume natural gas $F^{ij}$, that is
	\begin{flalign}
		U^{ij}(\alpha^{ij},\beta^{ij},F^{ij})&=W_e^{ij}(E_{use}^{ij})+W_h^{ij}(Q_{use}^{ij})\nonumber\\
		&+p_e^i\cdot E_{exc}^{ij}+p_h^i\cdot Q_{exc}^{ij}-c_f\cdot F^{ij}
	\end{flalign}
	where $W_e^{ij}$ (resp. $W_h^{ij}$) is the satisfaction function of community ${\rm C}_{ij}$ that provides electricity (resp. heat) to satisfy the usage of local residents in this community, and $E_{use}^{ij}$, $E_{exc}^{ij}$, $Q_{use}^{ij}$, and $Q_{exc}^{ij}$ are defined from (2) to (5). Here, $c_f$ $({\rm coin}/{\rm m}^3)$ is the unit cost of natural gas.
	
	From our simplified CHP model, we denote the cost of electricity (resp. heat) produced from ${\rm DES}_{ij}$ by $c_e$ (resp. $c_h$). Then, the cost $({\rm coin}/{\rm J})$ of electricity and heat can be quantified, that is $c_e=c_f/q$ and $c_h=c_f/(q\cdot\eta_r)$. Thus, we have $c_f\cdot F^{ij}=\eta_g\cdot c_e\cdot(qF^{ij})+(1-\eta_g)\cdot c_h\cdot\eta_r\cdot(qF^{ij})$. If the price $p_e^i$ (resp. $p_h^i$) offered by ${\rm EA}_i$ (resp. ${\rm HA}_i$) is less than the cost $c_e$ (resp. $c_h$), this ${\rm DES}_{ij}$ will not sell any electricity (resp. heat) to them. It will reduce the gas intake $F^{ij}$ such that only meet its local requirement. Like this, there is no energies trading between aggregators and DESs, and obviously, it is not what we want to see. Thus, it is reasonable to consider the prices offered by aggregators satisfy $p_e^i\geq c_e$ and $p_h^i\geq c_h$. At this time, for each ${\rm DES}_{ij}\in{\rm S}_i$, it will produce electricity and heat as much as possible, because of the fact that it is always profitable to sell them to the aggregators. For maximizing its utility, each CHP system will run at full capacity. Here, for each ${\rm CHP}_{ij}$, we define its maximum production capacity (maximum gas consumption) per day as $F_m^{ij}$. Therefore, the utility $U^{ij}(\alpha^{ij},\beta^{ij},F_m^{ij})$ can be denoted by $U^{ij}(\alpha^{ij},\beta^{ij})$, because $F_m^{ij}$ is considered as a constant.
	
	\begin{rem}
		After here, we denote $X^{ij}=\eta_g\cdot(qF^{ij}_m)$ and $Y^{ij}=(1-\eta_g)\cdot\eta_r\cdot(qF^{ij}_m)$ for convenience.
	\end{rem}
	
	Based on \cite{maharjan2013dependable} \cite{tushar2014three} \cite{su2018secure}, the natural logarithmic functions were adopted extensively in characterizing the satisfaction of comsuming energy. That is
	\begin{flalign}
		W_e^{ij}(E_{use}^{ij})&=k_e^{ij}\cdot\ln(1+b_e^{ij}\cdot E_{use}^{ij})\\
		W_h^{ij}(Q_{use}^{ij})&=k_h^{ij}\cdot\ln(1+b_h^{ij}\cdot Q_{use}^{ij})
	\end{flalign}
	where $k_e^{ij}$ (resp. $k_h^{ij}$) is a non-negative satisfaction coefficient for electricity (resp. heat) in community ${\rm C}_{ij}$, and $b_e^{ij}$ (resp. $b_h^{ij}$) is a non-negative adaption coefficient electricity (resp. heat) in this community as well. The adaption coefficients were proposed in \cite{guo2020combined} first, which aimed to control the variation range of the term $\ln(1+\cdot)$, avoid it growing infinitely. Generally, we let $\ln(1+b_e^{ij}\cdot E_{use}^{ij})=1$ (resp. $\ln(1+b_h^{ij}\cdot Q_{use}^{ij})=1$) when we choose $\alpha^{ij}=1$ (resp. $\beta^{ij}=1$) by setting a valid adaption coefficient $b_e^{ij}$ (resp. $b_h^{ij}$) \cite{guo2020combined}. Base on that, thereby we can formulate $b_e^{ij}$ and $b_h^{ij}$ as follows:
	\begin{equation}
		b_e^{ij}=(1/X^{ij})\cdot(e-1)\text{; }b_h^{ij}=(1/Y^{ij})\cdot(e-1)
	\end{equation}
	
	For aggregators in this city, power grid and heat station are the retailers for electricity and heat, however they do not have pricing power, because the retail prices of electricity and heat subject to government's regulation. Hence, we define a retail price $r_e$ (resp. $r_h$) of electricity (resp. heat). As a selfish participant, it requires that $p_e^i\in[c_e,r_e]$ and $p_h^i\in[c_h,r_h]$. From (6), if $p_e^i$ (resp. $p_h^i$) offered by ${\rm EA}_i$ (resp. ${\rm HA}_i$) is too low, each ${\rm DES}_{ij}\in{\rm S}_i$ will respond with raising its dispatching factor $\alpha^{ij}$ (resp. $\beta^{ij}$), and sell less energy to aggregators. If the aggregators offer a high price to purchase energy, their profitable spaces are reduced even if DESs are willing to sell more energies to them. Both of these cases will cause aggregators' profit to be cut down. Therefore, it is important for aggregators to offer a optimal price such that not only encourage DESs to sell more energies, but also ensure sufficient profitability. If ${\rm EA}_i$ (resp. ${\rm HA}_i)$) offers a price $p_e^i$ (resp. $p_h^i$), its profit function can be defined as
	\begin{flalign}
		V_e^{i}(p_e^i,p_h^i)&=(r_e-p_e^i)\cdot\sum\nolimits_{{\rm C}_{ij}\in{\rm S}_i}E_{exc}^{ij}\\
		V_h^{i}(p_h^i,p_e^i)&=(r_h-p_h^i)\cdot\sum\nolimits_{{\rm C}_{ij}\in{\rm S}_i}Q_{exc}^{ij}
	\end{flalign}
	where $V_e^{i}$ (resp. $V_h^{i}$) is the profit function of the ${\rm EA}_i$ (resp. ${\rm HA}_i$) that collects electricity (resp. heat) from DESs in its city, and $E_{exc}^{ij}$ and $Q_{exc}^{ij}$ are defined in (3) and (5).
	
\section{Byzantine-Based Blockchain}
In this section, we will introduce a smart contract used to perform energies trading, and design a novel byzantine-based consensus mechanism based on the B-MET system.

\subsection{Smart Contract}
	A smart contract is a collection of programmable digital agreement that every participant commit to comply. Under our blockchain-based energies trading ecosystem, a transaction can only happen between aggregators and DESs in the same city. Thereby, consider a city ${\rm S}_i$, a smart contract can be decided together by its participants, which consist of an aggregator $k\in\{{\rm EA}_i,{\rm HA}_i\}$ and a ${\rm DES}_{ij}\in{\rm S}_i$. We denote such a smart contract by $Contract(k,{\rm DES}_{ij},STime)$. Between anonymous and untrusted entities in a city, the smart contract is able to execute credible transactions without third institutions. Then, the procedure of its smart contract $Contract(k,{\rm DES}_{ij},STime)$ is presented as follows:
	
	\textit{1) System initialization: }At the beginning, each ${\rm DES}_{ij}\in{\rm S_i}$ needs to acquire a unique identification $ID_{ij}$ by registering in the designated institution authorized by government. It will be assigned with its public/private key pair $(PK_{ij},SK_{ij})$ and a $Account_{ij}$. That is
	\begin{equation*}
		\{ID_{ij},PK_{ij},SK_{ij},Account_{ij}\}\leftarrow\text{register}({\rm DES}_{ij})
	\end{equation*}
	where each account is associated with its wallet address and balance, $Account_{ij}\leftarrow\{ Address_{ij},Balance_{ij}\}$. Then, for each aggregator $k\in\{{\rm EA}_i,{\rm HA}_i\}$ in this city, it has those necessary information $\{ID_{k},PK_{k},SK_{k},Account_{k}\}$ as well. But there is a credit value $Credit_k$ that represent the reputation of aggregator $k$, thereby we have $ Account_{k}\leftarrow\{Address_k,Balance_k,Credit_k\}$. Here, technologies of asymmetric encryption are usually adopted by current blockchain system for the sake of security, privacy, and data integrity. Given a massage $msg$ encrypted by ${\rm DES}_{ij}$, we have $Hash(msg)=PK_{ij}(SK_{ij}(Hash(msg)))$, where the unforgeability and integrity is guaranteed.
	
	\textit{2) Creation: }An aggregator $k\in\{{\rm EA}_i,{\rm HA}_i\}$ offers a price $p_k$ to buy energy from communities in its city, then ${\rm DES}_{ij}$ responds it with the amount of energy $x\in\{E_{exc}^{ij},Q_{exc}^{ij}\}$ that can be sold to $k$. Like this, a new smart contract $ Contract( k,{\rm DES}_{ij},STime)$ is generated by signing with their private key respectively. Then, this contract will be broadcasted to all authorized participants (aggregators) in the ecosystem $\mathbb{S}$. After reaching a consensus, this smart contract will be deployed and executed automatically. Each smart contract between aggregator and DES, $Contract(k,{\rm DES}_{ij},STime)$, is associated with several variables, which include account information $(Account_{k},Account_{ij})$, offered price $p_k$, amount of energy $x$, expected transaction time $TransTime$, and timestamp $STime$. To guarantee this contract can be executed successfully, it needs to verify whether aggregator $k$ has sufficient balance such that ${\rm Balance}_k\geq p_k\cdot x$ and whether ${\rm DES}_{ij}$ has enough production capacity to supply $x$ amount of corresponding energy on time.
	
	\textit{3) Execution: } The $Contract(k,{\rm DES}_{ij},STime)$ will be executed if current time $t\geq{\rm TransTime}$ after reaching a consensus among aggregators in blockchain network. From now on, it begins to trade energy and finish payment. The smart meter in aggregator $k$ verifies whether the amount of energy has been transported to the designated location. Then, fed this result from smart meter into the smart contract, if yes, it will execute the payment process automatically, that is
	\begin{equation*}
		(k,Balance_k-p_k\cdot x)\text{; }({\rm DES}_{ij},Balance_{ij}+p_k\cdot x)
	\end{equation*}
	Here, we design a mechanism that the balance is permitted to be negative. At the moment of payment, the smart contract will complete payment as usual if the $k$'s balance is not enough to pay. In this way, the $k$'s balance will become negative. Then, any contract that aggregator $k$ participants in will not be executed until its balance back to be positive.
	
	Generally speaking, the energies trading between aggregators and DESs can be summarized as follows: In a smart city, a DES begins a smart contract with an aggregator by responding it according to its offered price. This contract needs to be verified by the consensus process in blockchain network. Then, it will execute the predefined procedure automatically once the trading conditions are met, which achieves the digital currency and energy exchange specified by contract between participants in a secure manner.
	
\subsection{Byzantine-based Consensus mechanism}
	As mentioned early, a consensus process is necessary to be performed so as to ensure the consistency of blockchain stored in every authorized node. Castro \textit{et al.} \cite{castro1999practical} proposed a practical byzantine fault tolerance (PBFT) algorithm, which has been used in consortium blockchain system widely. Based on it and combined with the characteristics of energies trading, we design a new byzantine-based consensus mechanism (BCM). Our consensus process is performed among all aggregators in the ecosystem $\mathbb{S}$, and each of them has a local transaction pool (LTP) to store all transactions it receives. The consensus process is executed round by round, and the time interval of block generation is given by $\Delta T$. There are three main stages, shown in Fig. \ref{fig3}, that is pre-prepare, prepare, and commit.
	
	\begin{figure}[!t]
		\centering
		\includegraphics[width=\linewidth]{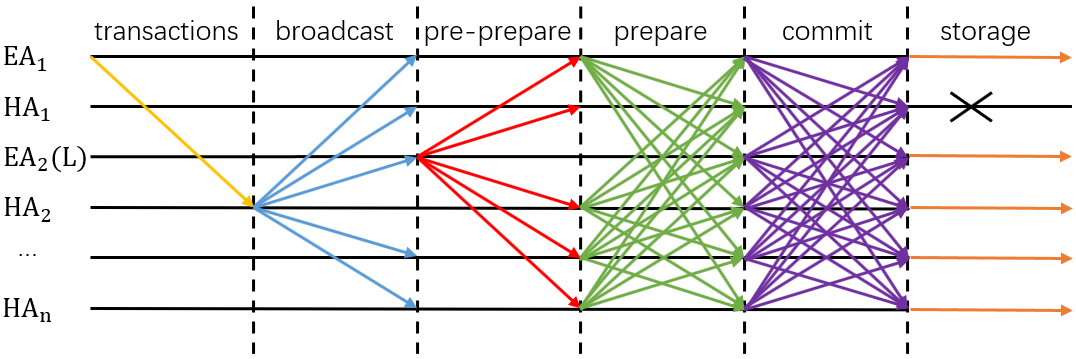}
		\caption{The consensus process of BCM.}
		\label{fig3}
	\end{figure}
	
	First, let us introduce the credit model, which will be used in leader election and to decide whether to reach a consensus. Let $M=\{{\rm EA}_1,{\rm HA}_1,{\rm EA}_2,{\rm HA}_2,\cdots\}$ be the collection of consensus nodes. We have known that there is an attribute $Credit_k\in[0,1]$ for each $k\in M$, where a larger $Credit_k$ implies node $k$ is more trustworthy. We denote by $Credit_k(i)$ the credit of node $k$ after finishing the $i$-th round consensus. Then, we can define $Credit_k(i+1)$ according to the result of consensus in the $(i+1)$-th round, where this result is whether to add the leader's block into the blockchain. That is: (1) when $k$ is the leader, we have $Credit_k(i+1)=\min\{1,Credit_k(i)+\Delta_1\}$ if its block is accepted to be added into the blockchain, else $Credit_k(i+1)=\max\{0,Credit_k(i)-\Delta_1\}$ if its block is rejected; and (2) when $k$ is not the leader, we have $Credit_k(i+1)=\min\{1,Credit_k(i)+\Delta_2\}$ if its decision is consistent with consensus result; else $Credit_k(i+1)=\max\{0,Credit_k(i)-\Delta_2\}$ if it disagrees with the majority. We usually give $\Delta_1>\Delta_2>0$ and initialize the credit of each consensus node as $credit_k(0)=0.5$. Consider entering the $(i+1)$-th round consensus, the detailed process of BCM is represented as follows:
	
	\textit{1) Leader election: }The first step in this round is to select a leader from all consensus nodes. This leader election is based on node's credit. Generally speaking, the better the credit value of a node, the more likely it is to be elected as the leader. Thus, the result of leader selection is unpredictable. For a node $k\in M$, the probability that it is elected as the leader of the $(i+1)$-th round consensus is  $\Pr[L(i+1)=k]$,
	\begin{equation}
		\Pr[L(i+1)=k]=\frac{Credit_k(i)}{\sum_{j\in M}Credit_j(i)}
	\end{equation}
	where $L(i+1)$ represents the leader of the $(i+1)$-th round consensus. Obviously, there is no chance to select a node whose credit is zero as the leader.
	
	\textit{2) Broadcast: }Each aggregator in $M$ broadcasts all transactions which happen in current $\Delta t$ and co-signed with a DES in its city to the blockchain network. All the consensus nodes will verify whether their received transactions are valid. Those valid transactions will be stored in their LTP, and invalid transactions will be discarded.
	
	\textit{3) Pre-prepare: }After all non-leader consensus nodes in $M\backslash L(i+1)$ have completed above verification process for received transactions, the leader will package those selected valid transactions in its LTP into a block $B_L$. Then, the leader signs this block and broadcasts pre-prepare message $SK_{L}(SK_{L}(B_L),pre$-$prepare)$ to the blockchain network.
	
	\textit{4) Prepare: }For each non-leader node $k\in M\backslash L(i+1)$, it will check the identity of leader and verify the pre-prepare message from the leader. The block verification needs to confirm the pointer to the previous block, mercle root is correct and compare the transactions in $B_L$ with the corresponding transactions in its LTP. If node $k$ believes $B_L$ is valid, it broadcasts this prepare message $SK_k(SK_{L}(B_L), prepare)$ to the blockchain network. All consensus nodes must make decisions in this step, whether to agree or disagree with adding block $B_L$ into the blockchain. Then for each node $k\in M$, it gathers all prepare massages from other consensus node, checks their identities and counts the weighted sum of received prepare messages. Let $A_k(i+1)\subseteq M$ be the set of nodes from which node $k$ receives prepare messages, including itself. If satisfying the following inequality
	\begin{equation}
		\sum_{a\in A_k(i+1)}\Pr[a]\geq\left(2\left\lfloor\frac{|M|-1}{3}\right\rfloor+1\right)\frac{1}{|M|}
	\end{equation}
	where $\Pr[a]={Credit_a(i)}/{\sum_{j\in M}Credit_j(i)}$, we say node $k$ will accept block $B_L$ and broadcast commit message $SK_k(SK_{L}(B_L),commit)$ to the blockchain network.
	
	\textit{5) Commit: }After sending their commit messages, they should waiting commit messages from other consensus node. For each node $k\in M$, its consensus process is completed until it recieve sufficient commit messages such that $\sum_{a\in B_k(i+1)}\Pr[a]\geq(2\lfloor(|M|-1)/{3}\rfloor+1)/|M|$, where $B_k(i+1)\subseteq M$ is the set of nodes from which node $k$ receives commit message, including itself.
	
	\textit{5) Add a block and update credits: }If a consensus node accepts the new block $B_L$, it will be appended into the blockchain in a linear and chronological order, which includes a pointer to the previous block. Any failure occurs in these three stages will terminate the consensus of current round (do not add the new block). Besides, before finishing this round, we need to update the credits of all the consensus nodes according to the credit model. In next round, the consensus process will perform based on their new credits.
	
	Failures that terminate the current round and do not add a new block mainly include the following reasons: (1) the leader sends invalid block or do not send its packaged block before the deadline; and (2) too many malicious nodes do not breadcast prepare messages even though this block is valid. Shown as node ${\rm HA}_1$ in Fig. \ref{fig3}, it is a faulty node. The credit of those nodes that make mistakes in this consensus process will be reduced. Let $f$ be the number of malicious nodes. According to \cite{castro1999practical}, supposing  $f\leq\lfloor(|M|-1)/3\rfloor$, the faults can be tolerated by the consensus system with $|M|$ nodes. In our BCM, each consensus node's credit is initialized as a constant, thereby $2\lfloor(|M|-1)/3\rfloor+1$ good nodes can make sure that (13) is satisfied. As the consensus process performs more and more times, the good nodes' credit increase but malicious nodes' credit decrease gradually. Therefore, the credits in system will be more accumulative in good nodes. According to (13), the number of prepare message and commit message required to reach a consensus declines, which helps reduce latency and improve throughput. In summary, secure and traceable energies trading and digital currency exchange can be guaranteed by our proposed B-MET system.

\section{Multiple Energies Trading: A Stackelberg Approach}
	A non-cooperative Stackelberg game generally refers to the multilevel decision making processes of a number of independent decision-makers in response to the decision taken by the leading player of the game \cite{doi:10.1137/1.9781611971132}. In this section, we put forward a multi-leader multi-follower (MLMF) Stackelberg game to model the interactions in above smart contract between aggregators and DESs. Consider a smart city ${\rm S}_i$, the MLMF Stackelberg game $\mathbb{G}$ can be defined as
	\begin{equation}
		\mathbb{G}=\left\{{\rm S}_i,\mathbb{P},\mathbb{D},\{V_e^i,V_h^i\},\{U^{ij}\}_{{\rm C}_{ij}\in{\rm S}_i}\right\}
	\end{equation}
	where the components are shown as follows:
	
	\textit{1) Players set ${\rm S}_i$:} The aggregators ${\rm HA}_i$ and ${\rm EA}_i$ act as leaders, and offer a price respectively to the DESs. Then, ${\rm DES}_{ij}\in{\rm S}_i$ act as followers, and decide on the amount of electricity and heat they want to sell respectively according to the offered prices.
	
	\textit{2) Strategy spaces $\mathbb{P}$ and $\mathbb{D}$:} Let $\mathbb{P}=[c_e,r_e]\times[c_h,r_h]$ be the strategy space of two aggregators, where we say $\{p_e^i,p_h^i\}\in\mathbb{P}$ is a feasible strategy of ${\rm HA}_i$ and ${\rm EA}_i$. Then, let $\mathbb{D}=\times_{{\rm C}_{ij}\in{\rm S}_i}\{[0,1]\times[0,1]\}$ be the strategy space of all DESs in this city, and we have $\{\alpha^{ij},\beta^{ij}\}_{{\rm C}_{ij}\in{\rm S}_i}\in\mathbb{D}$ is a feasible strategy of DESs.
	
	\textit{3) Utility functions $\{V_e^i,V_h^i\}$ and $\{U^{ij}\}_{{\rm C}_{ij}\in{\rm S}_i}$:} Each player in this game aims to maximize its utility or profit, which reflects the quality of strategy that this player chooses. $\{V_e^i,V_h^i\}$ is the profits of aggregators, defined in (9) and (10); and $\{U^{ij}\}_{{\rm C}_{ij}\in{\rm S}_i}$ are the utilities of DESs in ${\rm S}_i$, defined in (6).

\subsection{DESs (Followers) Side Analysis}
	Given a price strategy $\{p_e^i,p_h^i\}\in\mathbb{P}$ offered by two aggregators in city ${\rm S}_i$, each ${\rm DES}_{ij}\in{\rm S}_i$ decides the amount of electricity $E_{use}^{ij}$ (resp. heat $Q_{use}^{ij}$) that sold to the ${\rm EA}_i$ (resp. ${\rm HA}_i$) by adjusting its dispatching factor $\alpha^{ij}$ (resp. $\beta^{ij}$). Thus, each ${\rm DES}_{ij}\in{\rm S}_i$ aims to choose its optimal dispatching factors $\{\alpha^{ij},\beta^{ij}\}$ according to $\{p_e^i,p_h^i\}$ by solving the following optimization problem (${\rm OP_{DES}}$), that is
	\begin{flalign}
		&\max\nolimits_{\{\alpha^{ij},\beta^{ij}\}}U^{ij}(\alpha^{ij},\beta^{ij})\\
		&\text{ s.t. } E_{use}^{ij}+Q_{use}^{ij}\geq M_{min}^{ij}\text{; }\{\alpha^{ij},\beta^{ij}\}\in[0,1]\times[0,1]
	\end{flalign}
	where $M_{min}^{ij}$ is the minimum amount of energy that is required to maintain the basic life for those residents living in this community. The objective function, shown in (6), is strictly concave and continuously differentiable, this ${\rm OP_{DES}}$ is a convex optimization problem, which will be proved later. Thus, the stationary solution is unique and optimal.
	
	To ensure reasonableness, the $E_{min}^{ij}$ should be in a valid range, thus we have $M_{min}^{ij}\in(\max\{X^{ij},Y^{ij}\},X^{ij}+Y^{ij})$. It means that this minimum requirement is larger than the production capacity of electricity or heat separately, which implies that $\alpha^{ij}$ and $\beta^{ij}$ are impossible to approach zero definitely. Hence, the restrictions on (16) can be converted equivalently to constraint (17), that is
	\begin{equation}
		X^{ij}\cdot\alpha^{ij}+Y^{ij}\cdot\beta^{ij}\geq M_{min}^{ij}\text{; }\alpha^{ij}\leq 1\text{, }\beta^{ij}\leq 1
	\end{equation}
	Then, its first-order derivatives is
	\begin{flalign}
		&\frac{\partial U^{ij}}{\partial\alpha^{ij}}=X^{ij}\cdot\bigg(\frac{k_e^{ij}b_e^{ij}}{1+b_e^{ij}X^{ij}\alpha^{ij}}-p_e^i\bigg)\\
		&\frac{\partial U^{ij}}{\partial\beta^{ij}}=Y^{ij}\cdot\bigg(\frac{k_h^{ij}b_h^{ij}}{1+b_h^{ij}Y^{ij}\beta^{ij}}-p_h^i\bigg)
	\end{flalign}
	Let $\partial U^{ij}/\alpha^{ij}=0$ and $\partial U^{ij}/\beta^{ij}=0$, we have
	\begin{equation}
		\alpha^{ij}_\circ=\frac{1}{X^{ij}}\bigg(\frac{k_e^{ij}}{p_e^i}-\frac{1}{b_e^{ij}}\bigg)\text{; }\beta^{ij}_\circ=\frac{1}{Y^{ij}}\bigg(\frac{k_h^{ij}}{p_h^i}-\frac{1}{b_h^{ij}}\bigg)
	\end{equation}
	Here, we need to note that the setting of parameter $k_e^{ij}$ (resp. $k_h^{ij}$) must be in a valid range such that $\alpha^{ij}_\circ\in(0,1)$ (resp. $\beta^{ij}_\circ\in(0,1)$) for any offered price $p_e^i\in[c_e,r_e]$ (resp. $p_h^i\in[c_h,r_h]$). Or else, this utility function is monotone, and it is meaningless to adjust its dispatching factors. Base on that, thereby we can restrict $k_e^{ij}$ and $k_h^{ij}$ as follows:
	\begin{equation}
		k_e^{ij}\in\left(\frac{r_eX^{ij}}{e-1},\frac{c_eX^{ij}}{1-1/e}\right)\text{; }k_h^{ij}\in\left(\frac{r_hY^{ij}}{e-1},\frac{c_hY^{ij}}{1-1/e}\right)
	\end{equation}
	where it assume $r_e<e\cdot c_e$ (resp. $r_h<e\cdot c_h$), or else no such $k_e^{ij}$ (resp. $k_h^{ij}$) can keep $\alpha^{ij}_\circ\in(0,1)$ (resp. $\beta^{ij}_\circ\in(0,1)$) satisfied for any offered prices.
	
	Sequentially, we use Lagrange's multipliers $\lambda_1$, $\lambda_2$ and $\lambda_3$ for constraint (17), thereby the ${\rm OP_{DES}}$, shown as (15) and (17), can be converted to the following form $L^{ij}(\alpha^{ij},\beta^{ij},\lambda_1,\lambda_2,\lambda_3)$, that is
	\begin{flalign}
	L^{ij}&=U^{ij}(\alpha^{ij},\beta^{ij})\nonumber\\
	&+\lambda_1\left(X^{ij}\cdot\alpha^{ij}+Y^{ij}\cdot\beta^{ij}-M^{ij}_{min}\right)\nonumber\\
	&+\lambda_2(1-\alpha^{ij})+\lambda_3(1-\beta^{ij})
	\end{flalign}
	where we denote $Z^{ij}=X^{ij}+Y^{ij}-M_{max}^{ij}$. Then, Based on (22), the complementary slackness conditions (KKT conditions) of ${\rm OP_{DES}}$ are demonstrated as follows:
	\begin{flalign}
		&\frac{\partial L^{ij}}{\partial\alpha^{ij}}=\frac{\partial U^{ij}}{\partial\alpha^{ij}}+\lambda_1X^{ij}-\lambda_2=0\\
		&\frac{\partial L^{ij}}{\partial\beta^{ij}}=\frac{\partial U^{ij}}{\partial\beta^{ij}}+\lambda_1Y^{ij}-\lambda_3=0\\
		&\lambda_1\left(X^{ij}\cdot\alpha^{ij}+Y^{ij}\cdot\beta^{ij}-M^{ij}_{min}\right)=0\\
		&\lambda_2(1-\alpha^{ij})=0\\
		&\lambda_3(1-\beta^{ij})=0\\
		&\lambda_1\geq 0\text{, }\lambda_2\geq 0\text{, }\lambda_3\geq 0\text{, and constraints (15)}
	\end{flalign}
	The optimal solutions of ${\rm OP_{DES}}$, shown as (15) and (17), can take one of the following four cases, that is
	
	\textit{1) Case 1:} For $\alpha^{ij}<1$ and $\beta^{ij}<1$, we have $\lambda_2=\lambda_3=0$. Look at (25), if $\lambda_1=0$, substitute it into (23) and (24), we can get a solution $\{\alpha^{ij}_\circ,\beta^{ij}_\circ\}$ according to (20). Then, we need to check whether constraint (17) can be satisfied. If yes, the optimal solution is $\{\alpha^{ij}_\circ,\beta^{ij}_\circ\}$. If no, it means $\lambda_1>0$ and $X^{ij}\cdot\alpha^{ij}+Y^{ij}\cdot\beta^{ij}-M_{min}^{ij}=0$. At this time, by solving (23) and (24), we have
	\begin{flalign}
		&\alpha^{ij}_\Diamond=\frac{1}{X^{ij}}\bigg(\frac{k_e^{ij}}{p_e^i-\lambda_1}-\frac{1}{b_e^{ij}}\bigg)\\
		&\beta^{ij}_\Diamond=\frac{1}{Y^{ij}}\bigg(\frac{k_h^{ij}}{p_h^i-\lambda_1}-\frac{1}{b_h^{ij}}\bigg)
	\end{flalign}
	Substitute (29) and (30) into (25),
	\begin{equation}
		A^{ij}{\lambda_1}^2+B^{ij}\lambda_1+C^{ij}=0
	\end{equation}
	where $A^{ij}=M_{min}^{ij}+1/b_e^{ij}+1/b_h^{ij}$, $B^{ij}=k_e^{ij}+k_h^{ij}-A^{ij}(p_e^i+p_h^i)$, and $C^{ij}=A^{ij}p_e^ip_h^i-k_e^{ij}p_h^i-k_h^{ij}p_e^i$. By solving (31), we have two solutions, they are
	\begin{equation}
		\lambda_1=\frac{-B^{ij}\pm\sqrt{{B^{ij}}^2-4A^{ij}C^{ij}}}{2A^{ij}}
	\end{equation}
	Here, it is easy to verify $B^{ij}<0$ and $C^{ij}>0$ based on (9), (21), and $M_{min}^{ij}\in(\max\{X^{ij},Y^{ij}\},X^{ij}+Y^{ij})$, thus it is possible that $\Delta^{ij}={B^{ij}}^2-4A^{ij}C^{ij}<0$. If $\Delta^{ij}<0$, there is no real solution; else we need to check whether the $\lambda_1$ defined on (32) satisfies $\lambda_1>0$. If $\lambda_1>0$, substitute (32) into (29) and (30), we obtain a solution $\{\alpha^{ij}_\Diamond,\beta^{ij}_\Diamond\}$. If it is feasible, namely $\alpha^{ij}_\Diamond,\beta^{ij}_\Diamond<1$, the optimal solution can be determined by $\{\alpha^{ij}_\Diamond,\beta^{ij}_\Diamond\}$.
	
	\textit{2) Case 2:} For $\alpha^{ij}=1$ and $\beta^{ij}<1$, we have $\lambda_3=0$. Look at (25), if $\lambda_1=0$, substitute it into (24), we can get a solution $\{1,\beta^{ij}_\circ\}$ according to (20). Then, we need to check whether constraint (17) can be satisfied and $\lambda_2={\partial U^{ij}}/{\partial\alpha^{ij}}\geq0$. If yes, the optimal solution $\{1,\beta^{ij}_\circ\}$. If no, it means $\lambda_1>0$ and $Y^{ij}\cdot\beta^{ij}+X^{ij}-M_{min}^{ij}=0$. According to (24), we have $\beta^{ij}_\square$ which is shown as (30). Substitute (30) into (25),
	\begin{equation}
		\bigg(\frac{k_h^{ij}}{p_h^i-\lambda_1}-\frac{1}{b_h^{ij}}\bigg)+X^{ij}-M_{min}^{ij}=0
	\end{equation}
	By solving (33), we have
	\begin{equation}
		\lambda_1=p_h^i-\frac{k_h^{ij}b_h^{ij}}{b_h^{ij}(M_{min}^{ij}-X^{ij})+1}
	\end{equation}
	If $\lambda_1>0$ and $\lambda_2={\partial U^{ij}}/{\partial\alpha^{ij}}+\lambda_1X^{ij}\geq0$, substitute (34) into (30), we obtain a solution $\{1,\beta^{ij}_\square\}$. If we have $\beta^{ij}_\square<1$, the optimal solution can be determined by $\{1,\beta^{ij}_\square\}$.
	
	\textit{3) Case 3:} For $\alpha^{ij}<1$ and $\beta^{ij}=1$, we have $\lambda_2=0$. Look at (25), if $\lambda_1=0$, substitute it into (23), we can get a solution $\{\alpha^{ij}_\circ,1\}$ according to (18). Then, we need to check whether constraint (17) can be satisfied and $\lambda_3={\partial U^{ij}}/{\partial\beta^{ij}}\geq0$. If yes, the optimal solution is $\{\alpha^{ij}_\circ,1\}$. If no, it means $\lambda_1>0$ and $X^{ij}\cdot\alpha^{ij}+Y^{ij}-M_{min}^{ij}=0$. According to (23), we have $\alpha^{ij}_\square$ which is shown as (29). Substitute (29) into (25),
	\begin{equation}
		\bigg(\frac{k_e^{ij}}{p_e^i-\lambda_1}-\frac{1}{b_e^{ij}}\bigg)+Y^{ij}-M_{min}^{ij}=0
	\end{equation}
	By solving (35), we have
	\begin{equation}
		\lambda_1=p_e^i-\frac{k_e^{ij}b_e^{ij}}{b_e^{ij}(M_{min}^{ij}-Y^{ij})+1}
	\end{equation}
	If $\lambda_1>0$ and $\lambda_3={\partial U^{ij}}/{\partial\beta^{ij}}+\lambda_1Y^{ij}\geq0$, substitute (36) into (29), we obtain a solution $\{\alpha^{ij}_\square,1\}$. If we have $\alpha^{ij}_\square<1$, the optimal solution can be determined by $\{\alpha^{ij}_\square,1\}$.
	
	\textit{4) Case 4:} For $\alpha^{ij}=1$ and $\beta^{ij}=1$, we have $\lambda_1=0$ because we have assumed $M_{min}^{ij}\leq X^{ij}+Y^{ij}$ before. Substitute it into (23) and (24), we have
	\begin{flalign}
		&\left(\alpha^{ij}_\Diamond=1\right)=\frac{1}{X^{ij}}\bigg(\frac{k_e^{ij}X^{ij}}{\lambda_2+p_e^iX^{ij}}-\frac{1}{b_e^{ij}}\bigg)\\
		&\left(\beta^{ij}_\Diamond=1\right)=\frac{1}{Y^{ij}}\bigg(\frac{k_h^{ij}Y^{ij}}{\lambda_3+p_h^iY^{ij}}-\frac{1}{b_h^{ij}}\bigg)
	\end{flalign}
	By solving (37) and (38), we have
	\begin{equation}
	\lambda_2=\frac{k_e^{ij}X^{ij}}{X^{ij}+1/b_e^{ij}}-p_e^iX^{ij}\text{; } \lambda_3=\frac{k_h^{ij}Y^{ij}}{Y^{ij}+1/b_h^{ij}}-p_h^iY^{ij}
	\end{equation}
	According to (9) (21), the maximum value of $\lambda_2$ can be obtained when giving $k_e^{ij}=c_eX^{ij}/(1-1/e)$. Substitute it into (39), we have $\lambda_2<c_e-p_e^i\leq0$ because of $p_e^i\in[c_e,r_e]$. By using the same way, we have $\lambda_3<c_h-p_h^i\leq0$ because of $p_h^i\in[c_h,r_h]$ as well. It does not satisfy (28), thus this solution $\{1,1\}$ is not feasible and cannot occur.
	
	To sum up, offered a price strategy $\{p_e^i,p_h^i\}\in\mathbb{P}$ by aggregators, the optimal response of each ${\rm C}_{ij}\in{\rm S}_i$ will be obtained by above procedure. It is one of the three cases, except case 4, that depends on the offered prices, minimum requirement $M_{min}^{ij}$, and choice of satisfaction cofficient $k_e^{ij}$ and $k_h^{ij}$. Since the expressions of the solution is very complicated, we cannot give a unified formal expression to summarize the results that contains all cases.
	
\subsection{Aggregators (Leaders) Side Analysis}
	After receiving the responses $E_{exc}^{ij}$ (resp. $Q_{exc}^{ij}$) of all ${\rm C}_{ij}$ in city ${\rm S}_i$, the profit gained by aggregators ${\rm EA}_i$ (resp. ${\rm HA}_i$) can be determined according to (10) and (11). They assume each ${\rm DES}_{ij}\in{\rm S}_i$ will respond to them with the optimal strategy according to their offered price. Thus, ${\rm EA}_i$ and ${\rm HA}_i$ aim to choose its optimal prices $\{p_e^i,p_h^i\}$ by solving the following optimization problem (${\rm OP_{AGS}}$), that is
	\begin{flalign}
		&\max\nolimits_{\{p_e^i\}}V_e^i(p_e^i,p_h^i)\text{ s.t. }p_e^i\in[c_e,r_e]\\
		&\max\nolimits_{\{p_h^i\}}V_h^i(p_h^i,p_e^i)\text{ s.t. }p_h^i\in[c_h,r_h]
	\end{flalign}
	where ${\rm EA}_i$ (resp. ${\rm HA}_i$) attempts to select an optimal price $p^{i*}_e$ (resp. $p^{i*}_h$) to maximize its profit given $p_h^i$ (resp. $p_e^i$). The objective function, shown in (10) (resp. (11)), is strictly concave and continuous differentiable with respect to $p^{i}_e$ (resp. $p^{i}_h$), which will be proved later.
	
	First, we consider electricity aggregator ${\rm EA}_i$ alone. Feed a price $p_e^i$ into $V_e^i(\cdot,p_h^i)$, the response $\alpha^{ij}(p^i_e,p_h^i)$ of each ${\rm DES}_{ij}\in{\rm S}_i$ must be in one of the following four events: (1) $\alpha^{ij}=\alpha^{ij}_\circ$; (2) $\alpha^{ij}=1$; (3) $\alpha^{ij}=\alpha^{ij}_\Diamond$; and (4) $\alpha^{ij}=\alpha^{ij}_\square$. Then, its first order derivatives is
	\begin{flalign}
		{\partial\alpha^{ij}}/{\partial p_e^i}&=-k_e^{ij}/\left(X^{ij}(p_e^i)^2\right)\\
		&=0\\
		&=-k_e^{ij}/\left(X^{ij}(p_e^i-\lambda_1)^2\right)\text{, }\lambda_1=\text{(31)}\\
		&=-k_e^{ij}/\left(X^{ij}(p_e^i-\lambda_1)^2\right)\text{, }\lambda_1=\text{(34)}
	\end{flalign}
	where it is one-to one correspondences between (42)$-$(45) and event (1)$-$(4). Then, the first-order derivative of ${\rm EA}_i$'s objective function is
	\begin{equation}
		\frac{\partial V_e^{i}}{\partial p_e^i}=-\sum_{{\rm C}_{ij}\in{\rm S}_i}X^{ij}\left((1-\alpha^{ij})+(r_e-p_e^i)\frac{\partial\alpha^{ij}}{\partial p_e^i}\right)
	\end{equation}
	Combined with (42)$-$(45), let ${\partial V_e^{i}}/{\partial p_e^i}=0$, we can get a solution $\hat{p}_{e}^i$ that maximizes $V_e^i(\cdot,p_h^i)$ given $p_h^i$. However, this $\hat{p}_{e}^i$ is constrained on the range of $[c_e,r_e]$, thus the optimal price strategy $\bar{p}_{e}^i$ of ${\rm EA}_i$ is shown as follows:
	\begin{equation}
		\bar{p}_{e}^i=\left
		\{\begin{IEEEeqnarraybox}[\relax][c]{l's}
		r_e,&if $\hat{p}_{e}^i\geq r_e$\\
		c_e,&if $\hat{p}_{e}^i\leq c_e$\\
		\hat{p}_{e}^i,&if $c_e<\hat{p}_{e}^i<r_e$%
		\end{IEEEeqnarraybox}
		\right.
	\end{equation}
	Due to the fact that the profit function $V_e^i(\cdot,p_h^i)$ given $p_h^i$ is strictly concave with respect to $p_e^i$, it increases first and then decreases with the increase of $p_e^i$. Thus, the maximum profit is obtained at the price of $r_e$ when $\hat{p}_{e}^i\geq r_e$; Similarly, the maximum profit is obtained at the price of $c_e$ when $\hat{p}_{e}^i\leq c_e$; else obtained at stationary point.
	
	Then, we consider heat aggregator ${\rm HA}_i$ alone. Feed a price $p_h^i$ into $V^i_h(\cdot,p_e^i)$, the respone $\beta^{ij}(p^i_e,p_h^i)$ of each ${\rm DES}_{ij}\in{\rm S}_i$ must be in one of the following four events: (1) $\beta^{ij}=\beta^{ij}_\circ$; (2) $\beta^{ij}=1$; (3) $\beta^{ij}=\beta^{ij}_\Diamond$; and (4) $\beta^{ij}=\beta^{ij}_\square$. Then, its first order derivative ${\partial\beta^{ij}}/{\partial p_h^i}$ can be computed by replacing $X^{ij}$ with $Y^{ij}$, $k_e^{ij}$ with $k_h^{ij}$, and $p_e^i$ with $p_h^i$ in (42)-(45), corresponding to event (1)$-$(4). Then, the first-order derivative of ${\rm HA}_i$'s objective function is
	\begin{equation}
		\frac{\partial V_h^{i}}{\partial p_h^i}=-\sum_{{\rm C}_{ij}\in{\rm S}_i}Y^{ij}\left((1-\beta^{ij})+(r_h-p_h^i)\frac{\partial\beta^{ij}}{\partial p_h^i}\right)
	\end{equation}
	Let ${\partial V_h^{i}}/{\partial p_h^i}=0$, we can get a solution $\hat{p}_{h}^i$ that maximizes $V_h^i(\cdot,p_e^i)$ given $p_e^i$. Similar to (47), constrained on the range of $[c_h,r_h]$, the optimal price $\bar{p}_{h}^i$ of ${\rm HA}_i$ can be formulated similar to the analysis of (47) from its concavity.
	
\subsection{Stackalberg Equilibrium}
	The aggregators, ${\rm EA}_i$ and ${\rm HA}_i$ in a smart city ${\rm S}_i$, play a non-cooperative game with each other to offer the unit prices for electricity and heat. They all want to maximize their profit according to their profit function defined on (10) and (11). We denote this game between aggregators by $\mathbb{A}=\{\{{\rm HA}_i,{\rm EA}_i\},\mathbb{P},\{V_e^i,V_h^i\}\}$ and introduce the concept of the Nash equilibrium (NE) shown as follows:
	\begin{defn}[Nash Equilibrium]
		Given a game $\mathbb{A}$ defined as above, a feasible price strategy $\{\tilde{p}_{e}^i,\tilde{p}_{h}^i\}\in\mathbb{P}$ is the Nash equilibrium if no player can improve its profit by changing its strategy unilaterally, that is
		\begin{equation}
			V_e^i(\tilde{p}_{e}^i,\tilde{p}_{h}^i)\geq V_e^i(p_e^i,\tilde{p}_{h}^i)\text{; }V_h^i(\tilde{p}_{h}^i,\tilde{p}_{e}^i)\geq V_h^i(p_h^i,\tilde{p}_{e}^i)
		\end{equation}
	\end{defn}

	There is a property that at the NE, no aggregator attempts to offer a new price again because they all achieve their mutually satisfactions respectively. Then, we need to study the existence and uniqueness of the NE of game $\mathbb{A}$ between two aggregators in a city.
	\begin{lem}
		The Nash equilibirum of game $\mathbb{A}$ between aggregators always exist and is unique.
	\end{lem}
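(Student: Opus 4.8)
The plan is to prove existence by invoking a standard fixed-point theorem for concave games and then obtain uniqueness through a best-response (contraction) analysis of the two-player game $\mathbb{A}$.

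For existence I would first note that the joint strategy space $\mathbb{P}=[c_e,r_e]\times[c_h,r_h]$ is a nonempty, compact, convex subset of $\mathbb{R}^2$. Two structural facts are already available: the follower program ${\rm OP_{DES}}$ is convex, so each DES's response $(\alpha^{ij},\beta^{ij})$ is a well-defined single-valued function of the price pair $(p_e^i,p_h^i)$; and each aggregator's profit $V_e^i(\cdot,p_h^i)$ (resp. $V_h^i(\cdot,p_e^i)$) is strictly concave in its own price. The one remaining ingredient for the Debreu--Glicksberg--Fan existence theorem is joint continuity of $V_e^i$ and $V_h^i$ on $\mathbb{P}$. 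Since the DES response is assembled from the four KKT cases of the previous subsection, I would check that the response map is continuous across case boundaries: at each boundary the switching constraint in (17) transitions between active and inactive, so the associated multiplier in (29)--(39) vanishes continuously and the neighbouring pieces agree. Continuity of the finite sums $\sum_{{\rm C}_{ij}}E_{exc}^{ij}$ and $\sum_{{\rm C}_{ij}}Q_{exc}^{ij}$ then follows, making $V_e^i,V_h^i$ continuous, and a pure-strategy NE exists.

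For uniqueness I would work directly with the best-response functions. Strict concavity makes $p_e^i\mapsto V_e^i$ have the unique maximizer $\bar{p}_e^i(p_h^i)$ given by the projection formula (47), and symmetrically $\bar{p}_h^i(p_e^i)$ is single-valued; an NE is precisely a fixed point of the composed map $T(p_e^i,p_h^i)=(\bar{p}_e^i(p_h^i),\bar{p}_h^i(p_e^i))$. It suffices to show $T$ is a contraction, for which I would bound the product of best-response slopes $\left|\frac{d\bar{p}_e^i}{dp_h^i}\right|\cdot\left|\frac{d\bar{p}_h^i}{dp_e^i}\right|<1$, obtained by implicit differentiation of the first-order conditions (46) and (49). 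The decisive simplification is that the two prices interact \emph{only} through the shared multiplier $\lambda_1$, which is nonzero solely in the coupled regime where the minimum-demand constraint (17) binds while both dispatching factors stay interior (Case 1 with $\lambda_1>0$). In all other regimes---the unconstrained Case 1 response (20), and Cases 2 and 3 where one factor is pinned at $1$ and the remaining multiplier in (34) or (36) depends on a single price---each best response is constant in the opponent's price, so the slope product is zero and the contraction bound holds trivially.

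The main obstacle is therefore the single coupled regime. There the cross-sensitivity propagates through the quadratic (31) defining $\lambda_1$, and I would differentiate (31) implicitly to express $\partial\lambda_1/\partial p_h^i$ and $\partial\lambda_1/\partial p_e^i$, then substitute into the derivative form (44) and hence into (46) and (49). The payoff of this computation is the inequality showing each aggregator's price adjusts less than one-for-one to the other's, so the two cross-sensitivities multiply to a quantity strictly below one. An equivalent route that sidesteps the explicit slopes is to verify Rosen's diagonally-strictly-concave condition: form the pseudo-gradient $\left(\partial V_e^i/\partial p_e^i,\;\partial V_h^i/\partial p_h^i\right)$ and show the symmetric part of its Jacobian is negative definite, which reduces to a diagonal-dominance inequality between the already-established strictly negative own second derivatives and the coupled cross terms. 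Either way, the analytic heart of the argument is controlling the cross-price sensitivity induced by the binding minimum-demand constraint; once that one inequality is secured, both the contraction and the Rosen formulation deliver uniqueness at once.
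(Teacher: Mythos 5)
Your existence argument is sound and is essentially the paper's own route: the paper likewise works on the compact, convex set $\mathbb{P}=[c_e,r_e]\times[c_h,r_h]$, computes the within-regime second derivatives of the DES responses (its (50)--(53)) to conclude $\partial^2 V_e^i/\partial {p_e^i}^2\leq 0$ and $\partial^2 V_h^i/\partial {p_h^i}^2\leq 0$, and then invokes Rosen's concave-game theorem; your continuity check across KKT case boundaries is, if anything, more careful than what the paper records.

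The gap is in uniqueness. Both of your routes---the contraction bound $\bigl|d\bar{p}_e^i/dp_h^i\bigr|\cdot\bigl|d\bar{p}_h^i/dp_e^i\bigr|<1$ and Rosen's diagonal strict concavity---hinge on a single inequality controlling the cross-price sensitivity created when the minimum-demand constraint (17) binds, and you never establish it: you describe the implicit differentiation of the quadratic (31) and assert what its ``payoff'' will be, but give no reason the resulting slope product is below one or that the pseudo-gradient Jacobian is negative definite. That inequality is exactly the substance of the claim, not a routine verification---the paper's own simulations show that under binding constraints each aggregator's optimal price is genuinely increasing in the other's, so the coupling is real and its magnitude must actually be bounded. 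Your reduction to ``the single coupled regime'' is also too coarse: the partition of $\mathbb{P}$ into KKT regimes is itself price-dependent, so even where within-regime responses decouple (your Cases 2 and 3), the maximizer $\bar{p}_e^i$ can sit at a regime boundary and move with $p_h^i$ as that boundary shifts; and with several DESs, different communities occupy different regimes at the same price pair, so each aggregator's profit is a sum of heterogeneous pieces whose kink locations all depend on both prices. To be fair, the paper has the same hole---it infers uniqueness from own-price concavity by citing Rosen, whose uniqueness theorem requires diagonal strict concavity, a condition the paper never verifies---so you have correctly located the missing ingredient; but locating it is not supplying it, and as written your proposal does not prove uniqueness either.
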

	\begin{proof}
		The strategy space in game $\mathbb{A}$ has been denoted by $\mathbb{P}=[c_e,r_e]\times[c_h,r_h]$, which is a convex, closed, and non-empty subset of the space $\mathbb{R}^2$. Take aggregate ${\rm EA}_i$ as an example, $\alpha^{ij}$ is the responsive dispatching factor given the offered price $\{p_e^i,p_h^i\}\in\mathbb{P}$ from community ${\rm C}_{ij}\in{\rm S}_i$. From (42)$-$(45), its second-order derivatives is
		\begin{flalign}
			{\partial^2\alpha^{ij}}/{\partial {p_e^i}^2}&=2k_e^{ij}/\left(X^{ij}(p_e^i)^3\right)\\
			&=0\\
			&=2k_e^{ij}/\left(X^{ij}(p_e^i-\lambda_1)^3\right)\text{, }\lambda_1=\text{(29)}\\
			&=2k_e^{ij}/\left(X^{ij}(p_e^i-\lambda_1)^3\right)\text{, }\lambda_1=\text{(32)}
		\end{flalign}
		where they correspond to event (1)$-$(4). Then, the second-order derivative of ${\rm EA}_i$'s objective function is
		\begin{equation}
			\frac{\partial^2 V_e^{i}}{\partial {p_e^i}^2}=\sum_{{\rm C}_{ij}\in{\rm S}_i}X^{ij}\left(2\cdot\frac{\partial\alpha^{ij}}{\partial p_e^i}-(r_e-p_e^i)\frac{\partial^2\alpha^{ij}}{\partial {p_e^i}^2}\right)
		\end{equation}
		Here, observe that ${\partial\alpha^{ij}}/{\partial p_e^i}\leq 0$ from (42)$-$(45), and ${\partial^2 V_e^{i}}/{\partial {p_e^i}^2}\geq 0$ from (50)$-$(53), we have ${\partial^2 V_e^{i}}/{\partial {p_e^i}^2}\leq 0$. Thus, $V_e^{i}(\cdot,p_h^i)$ is concave with respect to $p_e^i$.
		
		Consider aggregator ${\rm HA}_i$ and $\beta^{ij}$ from community ${\rm C}_{ij}\in{\rm S}_i$, its second-order derivative ${\partial^2\beta^{ij}}/{\partial {p_e^i}^2}$ can be computed by replacing $X^{ij}$ with $Y^{ij}$, $k_e^{ij}$ with $k_h^{ij}$, and $p_e^i$ with $p_h^i$ in (48)$-$(51). Then, the second-order derivative of ${\rm HA}_i$'s objective function is
		\begin{equation}
			\frac{\partial^2 V_h^{i}}{\partial {p_h^i}^2}=\sum_{{\rm C}_{ij}\in{\rm S}_i}Y^{ij}\left(2\cdot\frac{\partial\beta^{ij}}{\partial p_h^i}-(r_h-p_h^i)\frac{\partial^2\beta^{ij}}{\partial {p_h^i}^2}\right)
		\end{equation}
		By similar analysis, we have ${\partial^2 V_h^{i}}/{\partial {p_h^i}^2}\leq 0$ and $V_h^{i}(\cdot,p_e^i)$ is concave with respect to $p_h^i$. Thus, game $\mathbb{A}$ is a concave 2-person game. Because of their concavity, the Nash equilibrium exists and is unique according to \cite{rosen1965existence}.
	\end{proof}
	In a smart city ${\rm S}_i$, the aggregators offer the price strategy $\{p_e^i,p_h^i\}\in\mathbb{P}$ in the first stage, then each ${\rm DES}_{ij}\in{\rm S}_i$ decides its optimal dispatching strategy $\{\alpha^{ij},\beta^{ij}\}$ according to the offered prices in the second stage. It formulates a MLMF Stackelberg game $\mathbb{G}$ between aggregators and DESs, shown as (14). The optimal strategy set $\{\{\tilde{p}_e^i,\tilde{p}_h^i\},\{\gamma^{il}\}_{{\rm C}_{il}\in{\rm S}_i}\}$, where $\{\gamma^{il}\}$ is the optimal response of community ${\rm C}_{il}\in{\rm S}_i$ based on its previous leaders' prices, can be obtained at the Stackelberg equilibrium (SE), defined as follows:
	\begin{defn}[Stackelberg Equilibrium]
		Given a game $\mathbb{G}$ defined as (14), a feasible strategy $\{\{\tilde{p}_e^i,\tilde{p}_h^i\},\{\gamma^{il}\}_{{\rm C}_{il}\in{\rm S}_i}\}$ is the Stackelberg equilibrium if no player, including leaders and followers, can improve its utility or profit by changing its strategy unilaterally, that is
		\begin{flalign}
			&\small{U^{ij}\left(\tilde{p}^i,\{\gamma^{il}\}_{{\rm C}_{il}\in{\rm S}_i}\right)
			\geq U^{ij}\left(\tilde{p}^i,\bar{\gamma}^{ij}\cup\{\gamma^{il}\}_{{\rm C}_{il}\in{\rm S}_i\backslash{\rm C}_{ij}}\right)}\\
			&V_e^i\left(\tilde{p}^i,\{\gamma^{il}\}_{{\rm C}_{il}\in{\rm S}_i}\right)\geq V_e^i\left(\{p_e^i,\tilde{p}_h^i\},\{\gamma^{il}\}_{{\rm C}_{il}\in{\rm S}_i}\right)\\
			&V_h^i\left(\tilde{p}^i,\{\gamma^{il}\}_{{\rm C}_{il}\in{\rm S}_i}\right)\geq V_h^i\left(\{\tilde{p}_e^i,p_h^i\},\{\gamma^{il}\}_{{\rm C}_{il}\in{\rm S}_i}\right)
		\end{flalign}
		where we denote prices $\tilde{p}^i=\{\tilde{p}_e^i,\tilde{p}_h^i\}$ and $\bar{\gamma}^{ij}$ is any feasible strategy of ${\rm DES}_{ij}$.
	\end{defn}

	After reaching the SE, none of them tends to change its strategy again because they cannot improve their utilities or profits further by changing unilaterally. Then, we need to study the existence and uniqueness of the SE of game $\mathbb{G}$ between aggregators and DESs in a city.
	
	\begin{algorithm}[!t]
		\caption{\text{Find NE}}\label{a1}
		\begin{algorithmic}[1]
			\renewcommand{\algorithmicrequire}{\textbf{Input:}}
			\renewcommand{\algorithmicensure}{\textbf{Output:}}
			\REQUIRE Game $\mathbb{G}$ in a city ${\rm S}_i$ and a small step $\Delta$
			\ENSURE Price strategy $\{\tilde{p}_e^i,\tilde{p}_h^i\}$
			\STATE Initialize: $\{\tilde{p}_e^i,\tilde{p}_h^i\}\leftarrow\{c_e,c_h\}$
			\WHILE {True}
			\STATE $\{x,y\}\leftarrow\{\tilde{p}_e^i,\tilde{p}_h^i\}$
			\STATE // Consider aggregator ${\rm EA}_i$
			\IF {$V_e^i(\tilde{p}_{e}^i+\Delta,\tilde{p}_{h}^i)\geq V_e^i(\tilde{p}_{e}^i,\tilde{p}_{h}^i)$ and $V_e^i(\tilde{p}_{e}^i+\Delta,\tilde{p}_{h}^i)\geq V_e^i(\tilde{p}_{e}^i-\Delta,\tilde{p}_{h}^i)$}
			\STATE $\tilde{p}_{e}^i\leftarrow\min\{r_e,\tilde{p}_{e}^i+\Delta\}$
			\ELSIF {$V_e^i(\tilde{p}_{e}^i-\Delta,\tilde{p}_{h}^i)\geq V_e^i(\tilde{p}_{e}^i,\tilde{p}_{h}^i)$ and $V_e^i(\tilde{p}_{e}^i-\Delta,\tilde{p}_{h}^i)\geq V_e^i(\tilde{p}_{e}^i+\Delta,\tilde{p}_{h}^i)$}
			\STATE $\tilde{p}_{e}^i\leftarrow\max\{c_e,\tilde{p}_{e}^i-\Delta\}$
			\ENDIF
			\STATE // Consider aggregator ${\rm HA}_i$
			\IF {$V_h^i(\tilde{p}_{h}^i+\Delta,\tilde{p}_{e}^i)\geq V_h^i(\tilde{p}_{h}^i,\tilde{p}_{e}^i)$ and $V_h^i(\tilde{p}_{h}^i+\Delta,\tilde{p}_{e}^i)\geq V_h^i(\tilde{p}_{h}^i-\Delta,\tilde{p}_{e}^i)$}
			\STATE $\tilde{p}_{h}^i\leftarrow\min\{r_h,\tilde{p}_{h}^i+\Delta\}$
			\ELSIF {$V_h^i(\tilde{p}_{h}^i-\Delta,\tilde{p}_{e}^i)\geq V_h^i(\tilde{p}_{h}^i,\tilde{p}_{e}^i)$ and $V_h^i(\tilde{p}_{h}^i-\Delta,\tilde{p}_{e}^i)\geq V_h^i(\tilde{p}_{h}^i+\Delta,\tilde{p}_{e}^i)$}
			\STATE $\tilde{p}_{h}^i\leftarrow\max\{c_h,\tilde{p}_{h}^i-\Delta\}$
			\ENDIF
			\IF {$\{x,y\}=\{\tilde{p}_e^i,\tilde{p}_h^i\}$}
			\STATE Break
			\ENDIF
			\STATE // Reduce $\Delta$ once for each iteration
			\STATE $\Delta\leftarrow\delta\cdot\Delta$
			\ENDWHILE
			\RETURN $\{\tilde{p}_e^i,\tilde{p}_h^i\}$
		\end{algorithmic}
	\end{algorithm}

	\begin{thm}
		The Stackelberg equilibirum of game $\mathbb{G}$ between aggregators and DESs always exist and is unique.
	\end{thm}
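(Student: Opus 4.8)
The plan is to prove the claim by \emph{backward induction}, the standard device for hierarchical Stackelberg games, leaning on the two uniqueness facts already established: the uniqueness of every follower's optimal response and the uniqueness of the leaders' Nash equilibrium from Lemma~1.

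First I would fix the followers' behaviour as a well-defined single-valued map. For any announced price profile $\{p_e^i,p_h^i\}\in\mathbb{P}$, the followers' problems decouple: each ${\rm DES}_{ij}$ independently solves ${\rm OP_{DES}}$, which depends only on the prices and not on the other stations. Since $U^{ij}$ is strictly concave and continuously differentiable and the feasible set cut out by (17) is convex, closed, and non-empty, ${\rm OP_{DES}}$ is a convex program with a unique optimizer, and the exhaustive four-case KKT analysis of the DESs side exhibits it explicitly. Consequently the response $\{\alpha^{ij}(p_e^i,p_h^i),\beta^{ij}(p_e^i,p_h^i)\}$, and hence the surplus quantities $E_{exc}^{ij},Q_{exc}^{ij}$ and the leaders' profits $V_e^i,V_h^i$ in (10)--(11), are genuine single-valued functions of the prices.

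Next, substituting these best responses collapses the full game $\mathbb{G}$ onto the leaders' pricing game $\mathbb{A}=\{\{{\rm HA}_i,{\rm EA}_i\},\mathbb{P},\{V_e^i,V_h^i\}\}$, whose unique Nash equilibrium $\{\tilde p_e^i,\tilde p_h^i\}$ is furnished by Lemma~1. I would then propose the Stackelberg equilibrium $\bigl\{\{\tilde p_e^i,\tilde p_h^i\},\{\gamma^{il}\}_{{\rm C}_{il}\in{\rm S}_i}\bigr\}$ with $\gamma^{il}=\{\alpha^{il}(\tilde p_e^i,\tilde p_h^i),\beta^{il}(\tilde p_e^i,\tilde p_h^i)\}$ the unique follower response at those prices, and verify Definition~2 directly: inequality (56) holds because each $\gamma^{ij}$ maximizes $U^{ij}$ at $\tilde p^i$ by construction, while (57)--(58) are precisely the Nash inequalities that $\{\tilde p_e^i,\tilde p_h^i\}$ satisfies in $\mathbb{A}$. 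Existence therefore follows from existence of the leaders' equilibrium together with existence of the follower optimizers. For uniqueness, any Stackelberg equilibrium must have each follower at its unique ${\rm OP_{DES}}$ optimum and the leaders at a Nash equilibrium of $\mathbb{A}$; both being unique, so is the Stackelberg equilibrium.

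The step I expect to be the crux is the first one. The entire reduction rests on the follower response being \emph{single-valued}: only then are $V_e^i$ and $V_h^i$ well-defined functions on $\mathbb{P}$ and the concave-game argument of Lemma~1 applicable. Were ${\rm OP_{DES}}$ to admit multiple optima, the leaders' payoffs would become set-valued and both existence and uniqueness of the Stackelberg equilibrium could break down. Establishing the strict concavity of $U^{ij}$ together with convexity of the constraint set (17) is thus the load-bearing part of the whole argument.
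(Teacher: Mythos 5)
Your proposal is correct and follows essentially the same route as the paper's own proof: backward induction, combining the unique Nash equilibrium of the aggregators' game $\mathbb{A}$ from Lemma~1 with the uniqueness of each ${\rm DES}_{ij}$'s optimal response to any offered prices. The paper's argument is only a few sentences and leaves implicit what you spell out explicitly (single-valuedness of the follower response map and the direct verification of the Stackelberg equilibrium inequalities), so your write-up is a more careful rendering of the same idea rather than a different proof.
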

	\begin{proof}
		In this game $\mathbb{G}$, the aggregators will offer prices to those DESs to purchase energies in their city first. From Lemma 1, a Nash equilibrium always exists and is unique between aggregators. According to the aforementioned analysis on DESs side, they are able to respond to aggregators with their optimal dispatching strategies based on the offered prices and their restrictions. Thus, the Stackelberg equilibrium always exists and is unique.
	\end{proof}

\subsection{Distributed Algorithm}
In order to find the NE between aggregators based on the optimal responses from DESs, we adopt the sub-gradient technique \cite{boyd2004convex} \cite{xiao2011simple} \cite{zhang2016multi} for determining price strategies. It is shown in Algorithm \ref{a1}. In Algorithm \ref{a1}, each aggregators is assigned with its lowest price. At this time, the dispatching factors of each DES in this city is closest to one, which should be in feasible space defined on (17). Then, consider aggregator ${\rm EA}_i$, in each iteration, it updates its price in manner of increasing by $\Delta$ or decreasing by $\Delta$, where $\Delta$ is a given small step. According to current prices $\{\tilde{p}_e^i,\tilde{p}_h^i\}$, we compare the profits of ${\rm EA}_i$ by offering a price $\tilde{p}_e^i$, $\tilde{p}_e^i+\Delta$, and $\tilde{p}_e^i-\Delta$, then choose the best one and update the price $\tilde{p}_e^i$. Consider aggregator ${\rm HA}_i$ similarly, we compare the profits of ${\rm HA}_i$ by offering a price $\tilde{p}_h^i$, $\tilde{p}_h^i+\Delta$, and $\tilde{p}_h^i-\Delta$, then choose the best one and update the price $\tilde{p}_h^i$. At last, we update $\Delta$ with $\delta\cdot\Delta$ where $\delta\in(0,1)$ is a attenuation factor.

\begin{thm}
	Given an initial price strategy and step $\Delta$, the Nash equilibrium of game $\mathbb{A}$ can be obtained by the sub-gradient algorithm, shown in Algorithm \ref{a1}.
\end{thm}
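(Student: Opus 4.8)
The plan is to read Algorithm~\ref{a1} as a projected, Gauss--Seidel ascent on the two profit functions and to show that its iterates converge to the unique fixed point of the joint best-response map, which by Lemma~1 is the Nash equilibrium. First I would fix the structural fact supplied by Lemma~1: for any $p_h^i\in[c_h,r_h]$ the map $p_e^i\mapsto V_e^i(p_e^i,p_h^i)$ is strictly concave on the compact interval $[c_e,r_e]$, hence has a unique maximizer $\bar p_e^i(p_h^i)$ given by the clamped stationary point in (47), and symmetrically for $V_h^i$. Using strict concavity I would argue that the three-way comparison among $V_e^i(\tilde p_e^i,\tilde p_h^i)$, $V_e^i(\tilde p_e^i+\Delta,\tilde p_h^i)$, and $V_e^i(\tilde p_e^i-\Delta,\tilde p_h^i)$ is an ascent test: when the current $\tilde p_e^i$ lies below $\bar p_e^i(\tilde p_h^i)$ the upward difference dominates and the price moves up by $\Delta$, when it lies above it moves down by $\Delta$, and when $\tilde p_e^i$ sits within one step of $\bar p_e^i(\tilde p_h^i)$ the price stays. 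Hence each iteration moves each coordinate one step of size $\Delta$ toward its best response, i.e. the update is exactly a sign-of-gradient sub-gradient step for the respective player, with the Gauss--Seidel feature that ${\rm HA}_i$ reacts to the already-updated $\tilde p_e^i$.

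Second I would establish convergence of the iterate sequence. Since $\Delta$ is scaled by $\delta\in(0,1)$ once per iteration, the step sizes $\Delta_n=\delta^{\,n}\Delta$ are summable, so the total variation of each price coordinate is bounded by $\Delta/(1-\delta)$; as the iterates stay in the compact box $\mathbb{P}=[c_e,r_e]\times[c_h,r_h]$, the sequence $\{(\tilde p_e^{i},\tilde p_h^{i})\}$ is Cauchy and converges to a limit $(p_e^\star,p_h^\star)\in\mathbb{P}$, which is also the point at which the break condition $\{x,y\}=\{\tilde p_e^i,\tilde p_h^i\}$ is met in the limit.

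Third I would identify the limit with the NE. At $(p_e^\star,p_h^\star)$ the stay condition must hold for every $\Delta'>0$; feeding $\Delta'\to0$ into the ascent test from the first step and invoking strict concavity forces the projected first-order conditions $\partial V_e^i/\partial p_e^i=0$ and $\partial V_h^i/\partial p_h^i=0$ in the interior (and the corresponding boundary inequalities when a coordinate sits at an endpoint of its interval), via the derivative expressions (46) and (48). These are precisely the conditions $p_e^\star=\bar p_e^i(p_h^\star)$ and $p_h^\star=\bar p_h^i(p_e^\star)$ for a fixed point of the joint best-response map, i.e. a Nash equilibrium of $\mathbb{A}$; by Lemma~1 this equilibrium is unique, so the point returned by the algorithm is exactly it.

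The hard part will be ruling out premature halting, and this is where the coupling between the two players and the geometric step schedule must be handled with care. Because the steps are summable ($\sum_n\Delta_n=\Delta/(1-\delta)<\infty$) rather than merely vanishing, a generic sub-gradient argument (which asks for $\sum_n\Delta_n=\infty$) does not apply directly: in principle the prices could converge to a point where $\partial V_e^i/\partial p_e^i$ stays bounded away from zero simply because the step budget is exhausted before $\bar p_e^i(\tilde p_h^i)$ is reached. Closing this gap requires either a mild reachability condition relating $\Delta/(1-\delta)$ to the box width, or, preferably, exploiting that the best-response map inherited from the concave-game structure of Rosen~\cite{rosen1965existence} underlying Lemma~1 is non-expansive, so that the Gauss--Seidel iterates are pulled to the unique fixed point faster than the steps decay; establishing this contraction through a bound on the cross-derivatives $\partial^2 V_e^i/\partial p_e^i\partial p_h^i$ and $\partial^2 V_h^i/\partial p_h^i\partial p_e^i$ is the crux of a fully rigorous argument.
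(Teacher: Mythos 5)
Your proposal takes a genuinely different and far more careful route than the paper. The paper's entire proof is two sentences: it cites \cite{boyd2004convex} and \cite{xiao2011simple} for the claim that the sub-gradient method converges to an optimum in convex optimization, then notes that concavity of the profit functions implies no player can improve unilaterally at that optimum. It never analyzes the three-way ascent test, the Gauss--Seidel coupling between ${\rm EA}_i$ and ${\rm HA}_i$, the projection onto $[c_e,r_e]\times[c_h,r_h]$, or the step schedule. Your reconstruction of the iteration as a sign-of-gradient coordinate ascent, your Cauchy argument for convergence of the iterates, and your attempt to identify the limit via the clamped first-order conditions are all content the paper does not contain.

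The gap you flag at the end is real, and it is worth stating plainly: it is a gap in the paper's own proof as much as in your proposal. Because Algorithm \ref{a1} multiplies $\Delta$ by $\delta\in(0,1)$ every iteration, the steps $\Delta_n=\delta^n\Delta$ are summable with total budget $\Delta/(1-\delta)$, whereas the sub-gradient convergence theory the paper invokes requires diminishing but \emph{non-summable} steps precisely to exclude the premature-halting scenario you describe; with a summable schedule the iterates always converge, but possibly to a point strictly between the initialization and the best response, so the theorem as stated (``given an initial price strategy and step $\Delta$'', i.e., for arbitrary parameters) does not follow and can in fact fail when $\Delta/(1-\delta)$ is smaller than the distance from the initialization to the equilibrium. (With the paper's simulation values, $\Delta=10^{-10}$ and $\delta=0.999$ give a budget of $10^{-7}$, which exceeds the price-box width of about $2.5\times10^{-8}$, so the experiments happen to sit in a benign regime --- but that is a parameter check, not a proof.) Separately, even with a correct step schedule, convergence of alternating ascent in a two-player concave game is not a consequence of single-objective convex optimization results; it needs the kind of contraction or diagonal-dominance condition on the cross-partials $\partial^2 V_e^i/\partial p_e^i\partial p_h^i$ that you point to, which is related to but not supplied by the Rosen-style uniqueness argument \cite{rosen1965existence} behind Lemma 1. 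Neither you nor the paper establishes it; the difference is that you say so, while the paper hides both issues behind a citation.
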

\begin{proof}
	Based on the conclusion of \cite{boyd2004convex} \cite{xiao2011simple}, the sub-gradient algorithm can converge to an optimal solution in convex optimization. The objective functions of aggregators are concave, thus they cannot improve its profit by changing strategies unilaterally when reaching the optimal solution.
\end{proof}

\section{Numerical Simulations}
	In this section, we conduct several experiments to model price competition and energies trading in a smart city.
\subsection{Simulation Setup}
Consider a city ${\rm S}=\{\{{\rm EA},{\rm HA}\},\{{\rm C}_1,{\rm C}_2,\cdots,{\rm C}_n\}\}$, we denote the number of communities in this city by $n$. At the standard atmosphere, the calorific value of natural gas is $q=3.6\times10^7$ ${\rm J}/{\rm m}^3$ on average. The retail price of electricity in U.S. is $0.2$ ${\rm dollar}/{\rm kw}\cdot{\rm h}$. According to the conversion relationship of $1$ ${\rm kw}\cdot{\rm h}=3.6\times10^6$ ${\rm J}$, it is $5.5\times10^{-8}$ ${\rm dollar}/{\rm J}$. Equivalently, we regard it as $r_e=5.5\times10^{-8}$ ${\rm coin}/{\rm J}$ in our B-METS. The electric conversion of GT is $\eta_g=0.5$. We define the maximum gas consumption $F_m=200$ ${\rm m}^3/{\rm day}$ and its unit price $c_f=1.08$ ${\rm coin}/{\rm m}^3$. Thus, we have $c_e=3.00\times10^{-8}$ ${\rm coin}/{\rm J}$ and $p_e\in[3.00\times10^{-8},5.50\times10^{-8}]$ for the EA definitely. The efficiency of heat recovery system is given by $\eta_r=0.8$, thereby we have $c_h=3.75\times10^{-8}$ ${\rm coin}/{\rm J}$. The retail price of heat is $r_h=6.25\times10^{-8}$ ${\rm coin}/{\rm J}$. Thus, we have $p_h\in[3.75\times10^{-8},6.25\times10^{-8}]$ for the HA definitely. According to (9), we have $b_e=4.773\times10^{-10}$ and $b_h=5.966\times10^{-10}$. Then according to (21), we have $k_e\in[115.24,170.85]$ and $k_h\in[104.76,170.85]$.

\begin{figure}[!t]
	\centering
	\subfigure[EA's profit $V_e$ under $k_1$]{
		\includegraphics[width=0.48\linewidth]{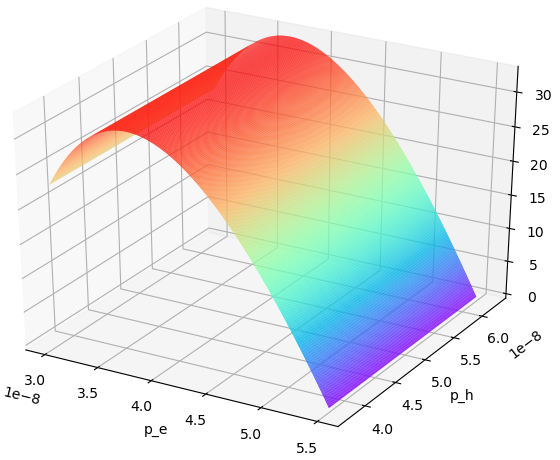}
		%\caption{fig1}
	}%
	\subfigure[HA's profit $V_h$ under $k_1$]{
		\includegraphics[width=0.48\linewidth]{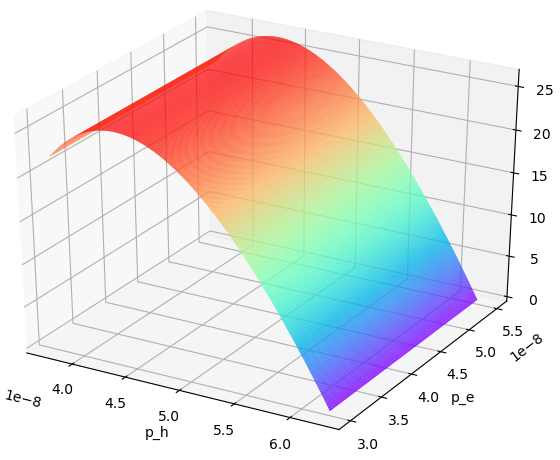}
		%\caption{fig1}
	}%

	\subfigure[DES's utility $U$ under $k_1$]{
		\centering
		\includegraphics[width=0.48\linewidth]{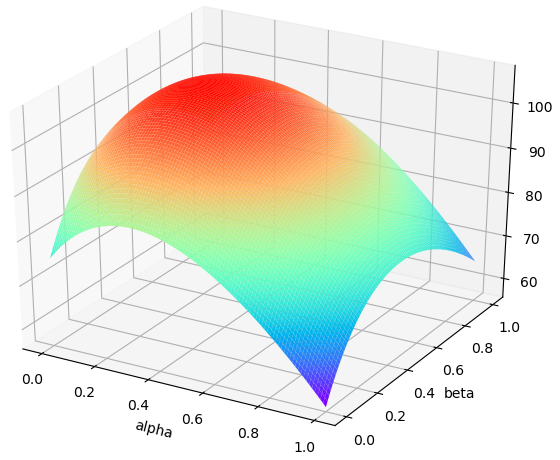}
		%\caption{fig2}
	}%
	\subfigure[EA's profit $V_e$ under $k_2$]{
		\centering
		\includegraphics[width=0.48\linewidth]{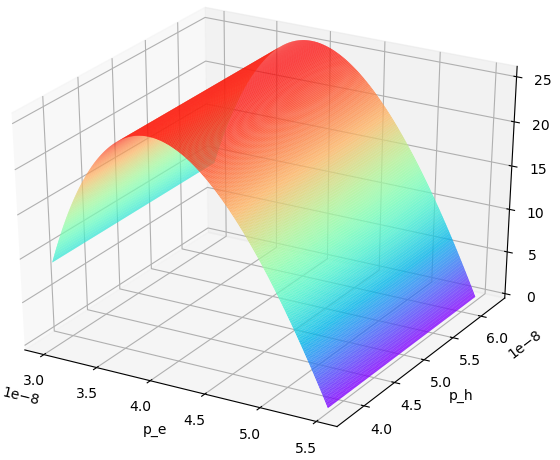}
		%\caption{fig2}
	}%

	\subfigure[HA's profit $V_h$ under $k_2$]{
		\centering
		\includegraphics[width=0.48\linewidth]{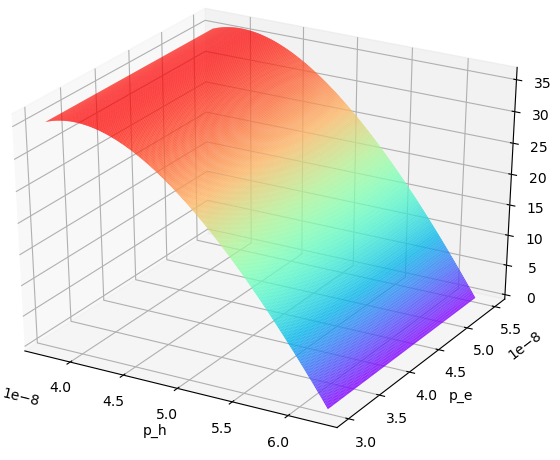}
		%\caption{fig2}
	}%
	\subfigure[DES's utility $U$ under $k_2$]{
		\centering
		\includegraphics[width=0.48\linewidth]{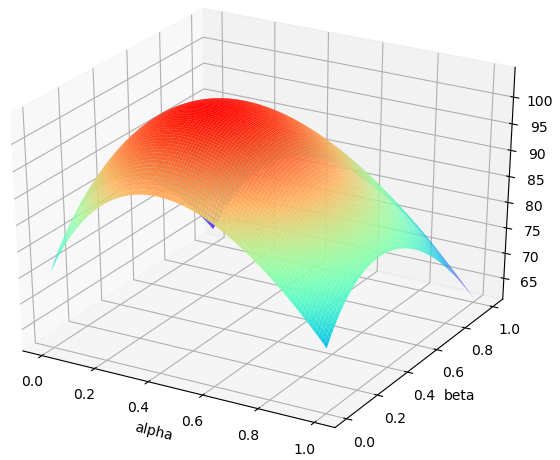}
		%\caption{fig2}
	}%
	\centering
	\caption{The objective function of entities, including EA, HA, and DES, in city ${\rm S}$ under the setting $k_1$ and $k_2$.}
	\label{fig4}
\end{figure}

\subsection{Simulation Results}
\textit{1) Concavity of functions: } Consider a city ${\rm S}$ that has only one community, we define this DES's satisfaction coefficient under two settings $k_1=(k_e,k_h)=(143.05,137.81)$ and $k_2=(k_e,k_h)=(159.73,117.98)$. Fig. \ref{fig4} draws the objective function of entities in city $S$ under the two settings, where we define the minimum energy restriction at ${\rm OP_{DES}}$ as $M_{min}=0$. It means that there is no restriction on DES to choose their partition coefficients in order to demonstrate complete functional properties. Let us look at (a) (b) (c) in Fig. \ref{fig4}. Shown as (a), as $p_e$ increases, EA's profit function increases first and then decreases under any price $p_h$ offered by HA, and its objective value has nothing to do with $p_h$. There is no competitiveness between EA and HA because of no restriction. It proves the profit function $V_e(\cdot,p_h)$ is concave with respect to $p_e$. Similarly, shown as (b), we have HA's profit function $V_h(\cdot,p_e)$ is concave with respect to $p_h$ as well. From (c), it is the utility function of this DES according to offered prices $p_e=p_h=4.5\times10^{-8}$, which prove the concavity of DES's utility function. It shows that our previous analysis about DESs' response is valid, and DESs are always able to respond aggregators with the optimal strategy according to the prices offered by aggregators. If a Nash equilibrium exists between aggregators, then a Stackelberg equilibrium among players in game $\mathbb{G}$ must exist definitely.

\textit{2) Effect of satisfaction coefficients: } Shown as (d) (e) (f) in Fig. \ref{fig4}, under the setting $k_2$, shown as (d) (e) (f) in Fig. \ref{fig4}, we increase $k_e$ but decrease $k_b$. Shown as (d), as $k_e$ increases, the maximum point that obtains the maximum profit for EA moves toward the positive direction. It implies that the EA has to offer a higher price to buy electricity from DES in order to gain the maximum profit, because electricity used to serve community can contribute more utility than before. Similarly, shown as (e), as $k_h$ decreases, the maximum point that obtains the maximum profit for HA moves toward the negative direction. From (f), as $k_e$ increases and $k_h$ decreases, the maximum point that obtains the maximum utility for DES according to $p_e=p_h=4.5\times10^{-8}$ moves from $(\alpha_\circ,\beta_\circ)=(0.301,0.481)$ in (c) to $(0.404,0.328)$ in (f). Thus, we have $\alpha_\circ$ (resp. $\alpha_\circ$) increases with the growth of $k_e$ (resp. $k_h$).

\begin{figure}[!t]
	\centering
	\subfigure[EA's profit function $V_e(\cdot,p_h)$]{
		\includegraphics[width=0.48\linewidth]{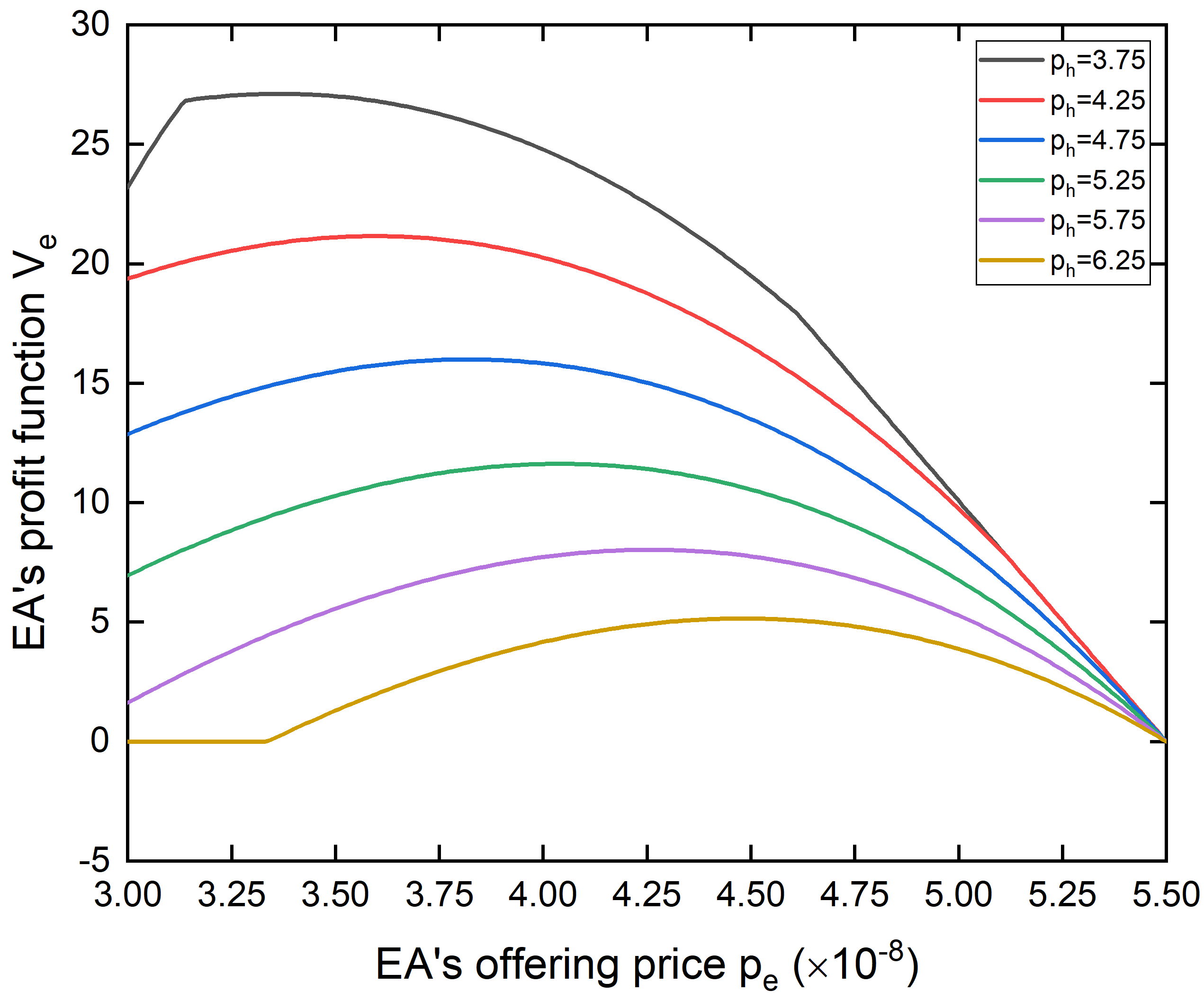}
		%\caption{fig1}
	}%
	\subfigure[DES's optimal response $\alpha$]{
		\includegraphics[width=0.48\linewidth]{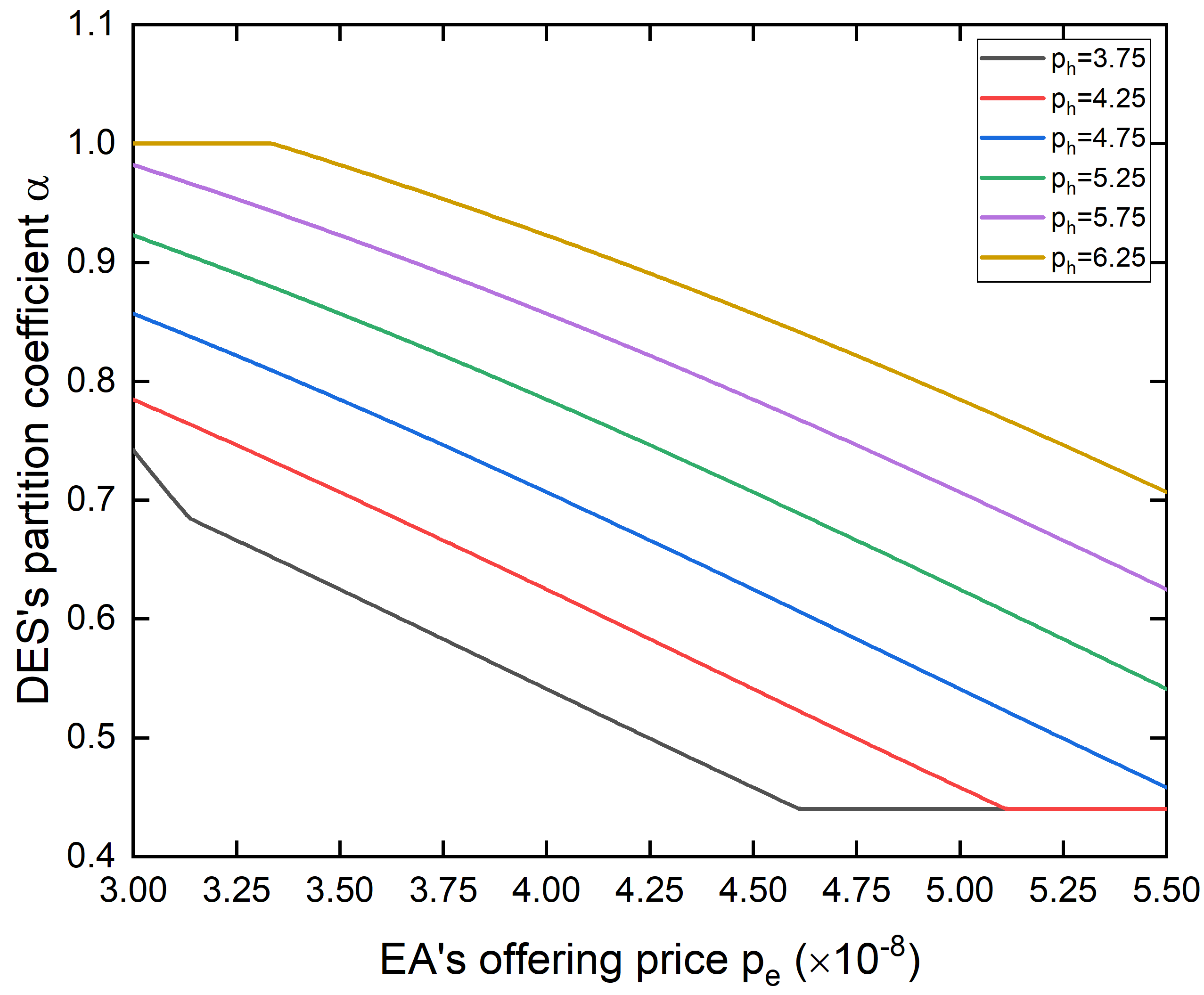}
		%\caption{fig1}
	}%

	\subfigure[DES's optimal response $\beta$]{
		\centering
		\includegraphics[width=0.48\linewidth]{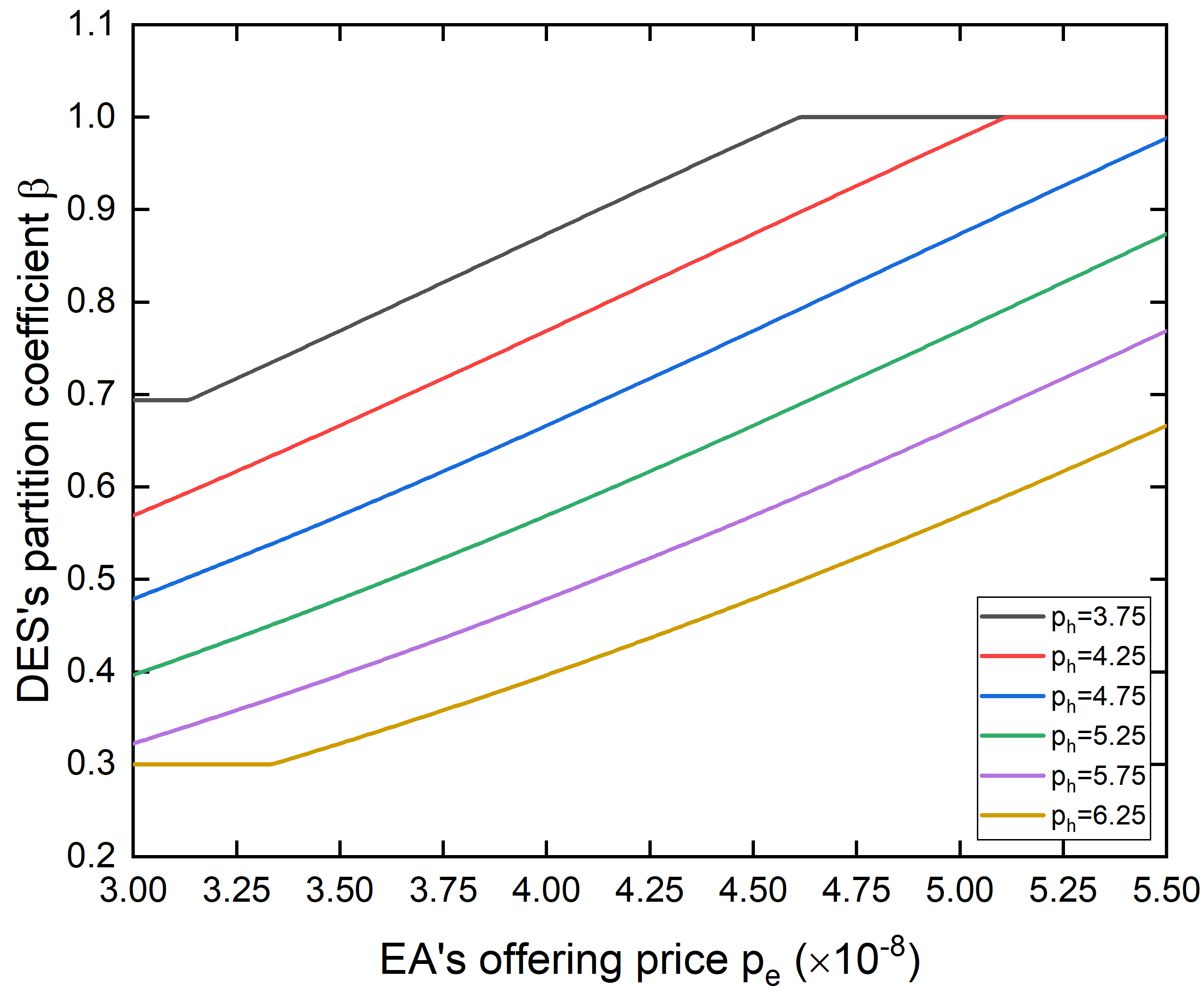}
		%\caption{fig2}
	}%
	\subfigure[HA's profit function $V_h(\cdot,p_e)$]{
		\includegraphics[width=0.48\linewidth]{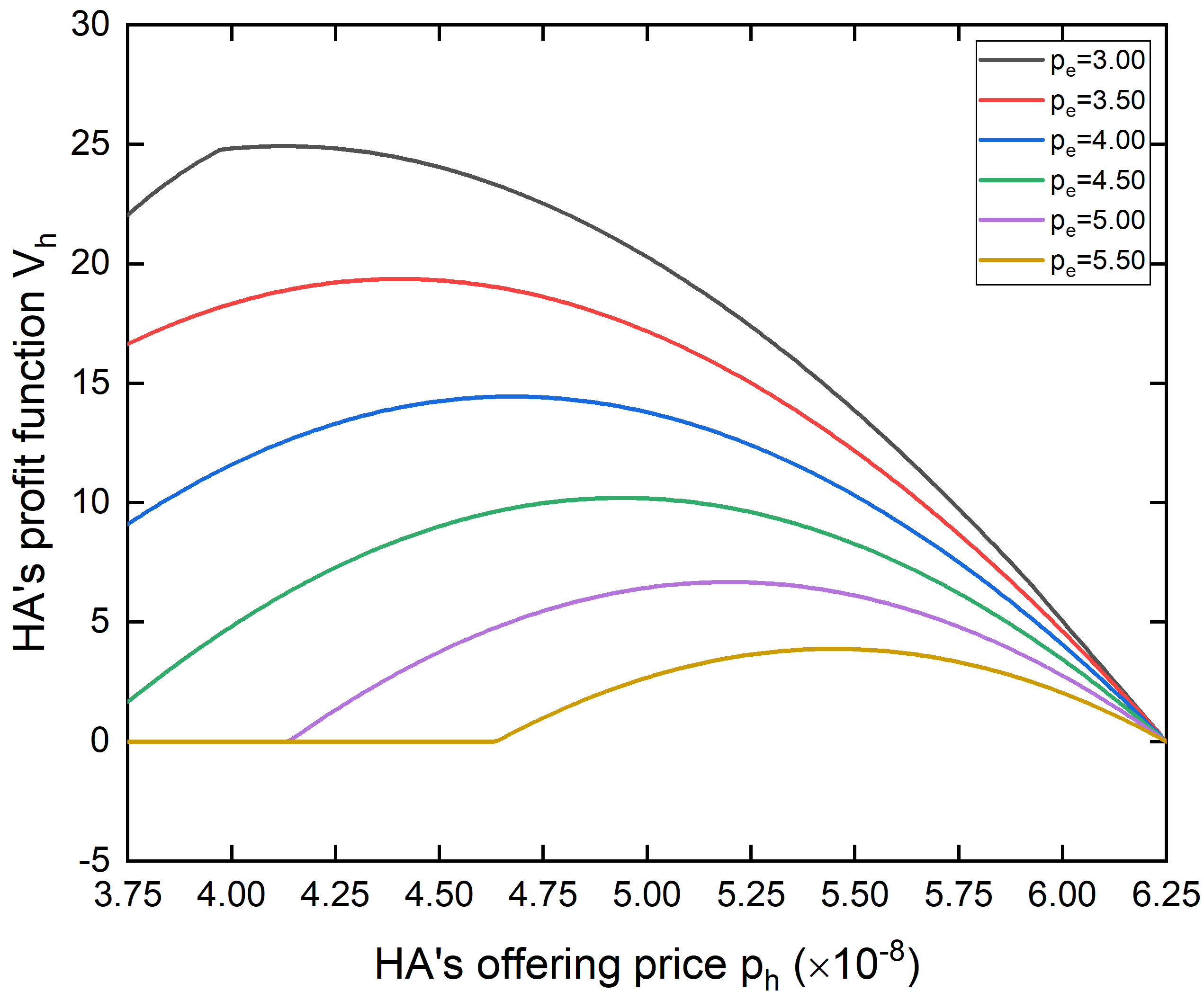}
		%\caption{fig1}
	}%

	\subfigure[DES's optimal response $\alpha$]{
		\centering
		\includegraphics[width=0.48\linewidth]{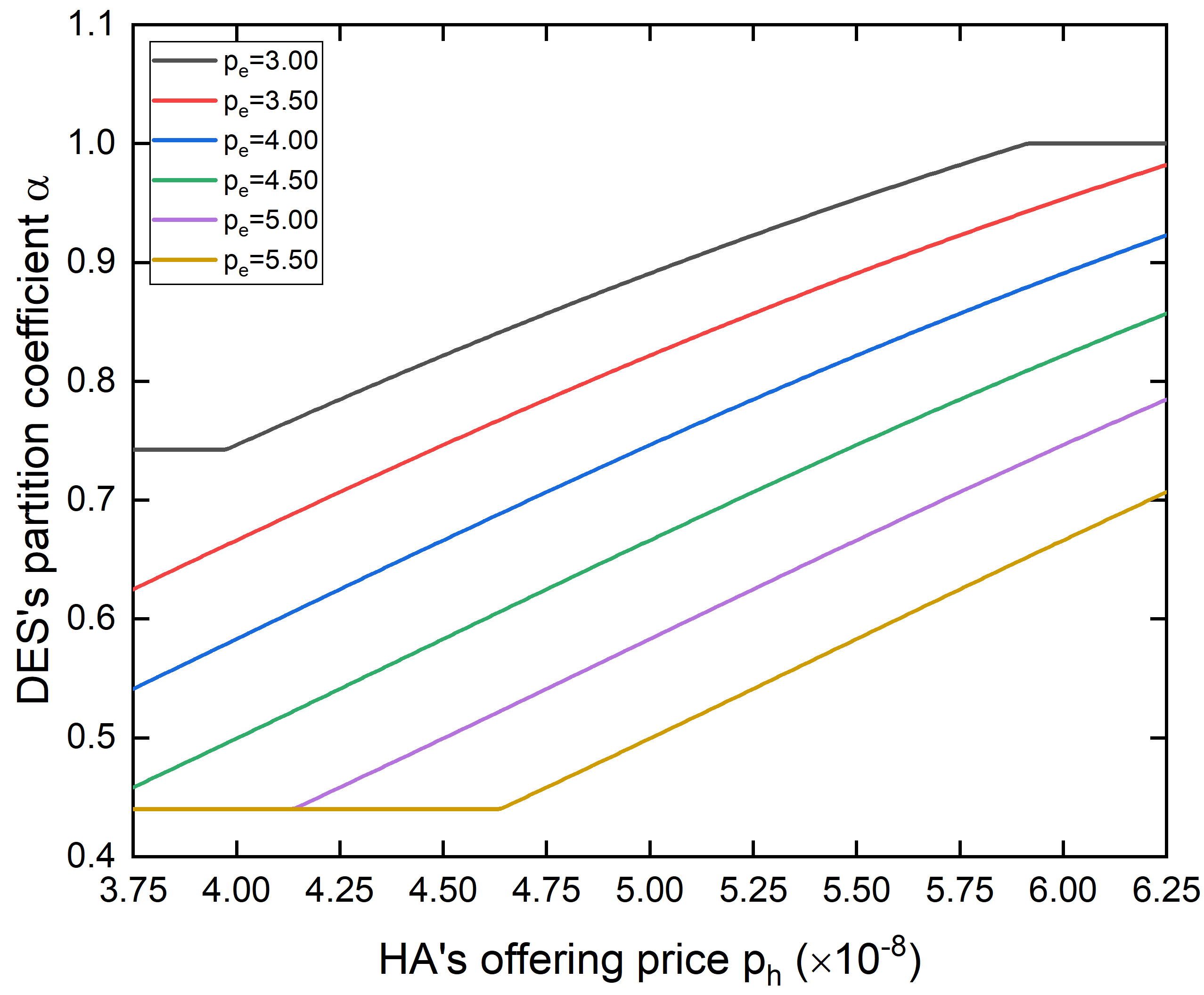}
		%\caption{fig2}
	}%
	\subfigure[DES's optimal response $\beta$]{
		\includegraphics[width=0.48\linewidth]{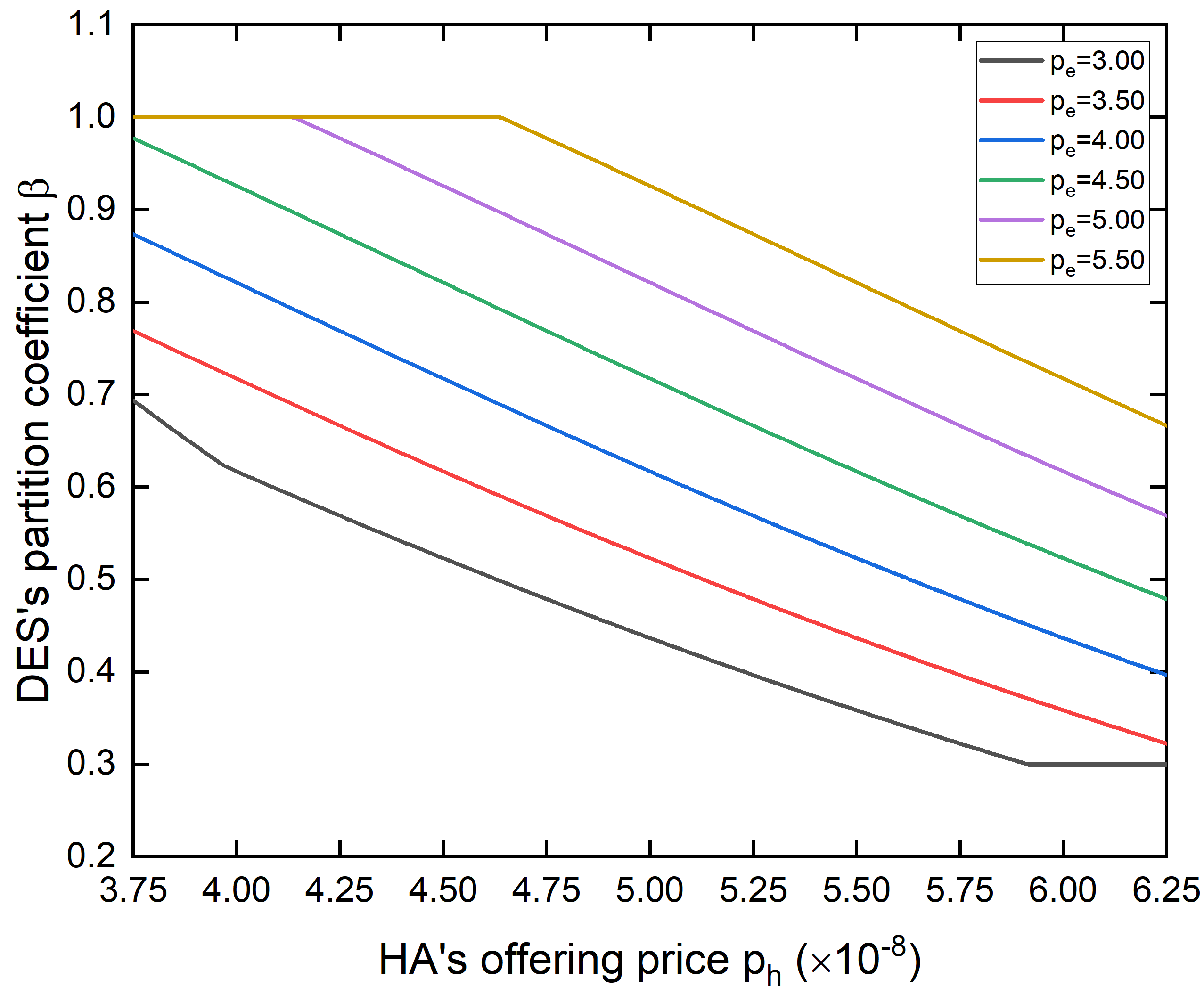}
		%\caption{fig1}
	}%
	\centering
	\caption{The objective functions of aggregators and DES's optimal responses in city ${\rm S}$ under the setting $M_1$.}
	\label{fig5}
\end{figure}

\begin{figure}[!t]
	\centering
	\subfigure[EA's profit function $V_e(\cdot,p_h)$]{
		\includegraphics[width=0.48\linewidth]{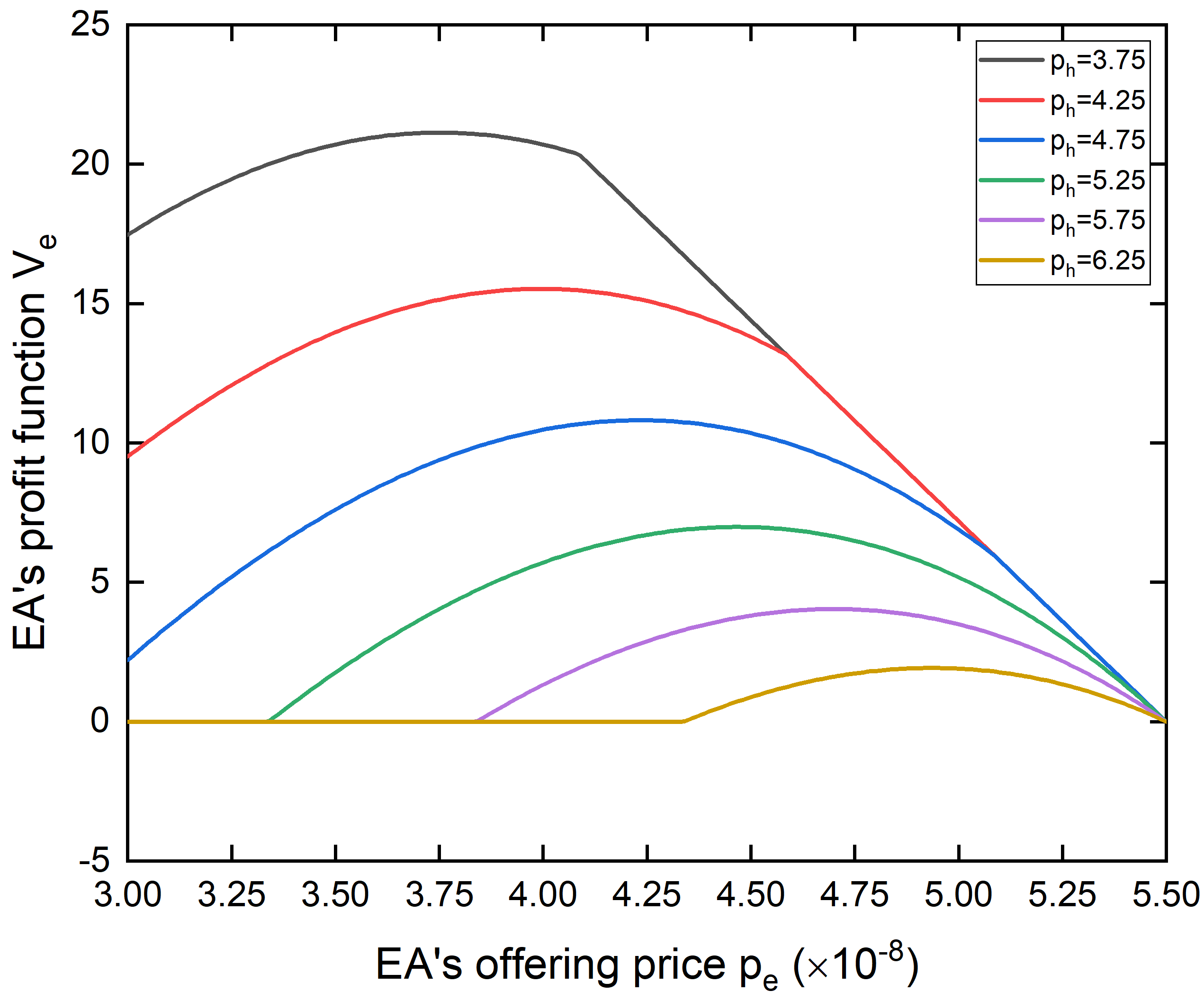}
		%\caption{fig1}
	}%
	\subfigure[DES's optimal response $\alpha$]{
		\includegraphics[width=0.48\linewidth]{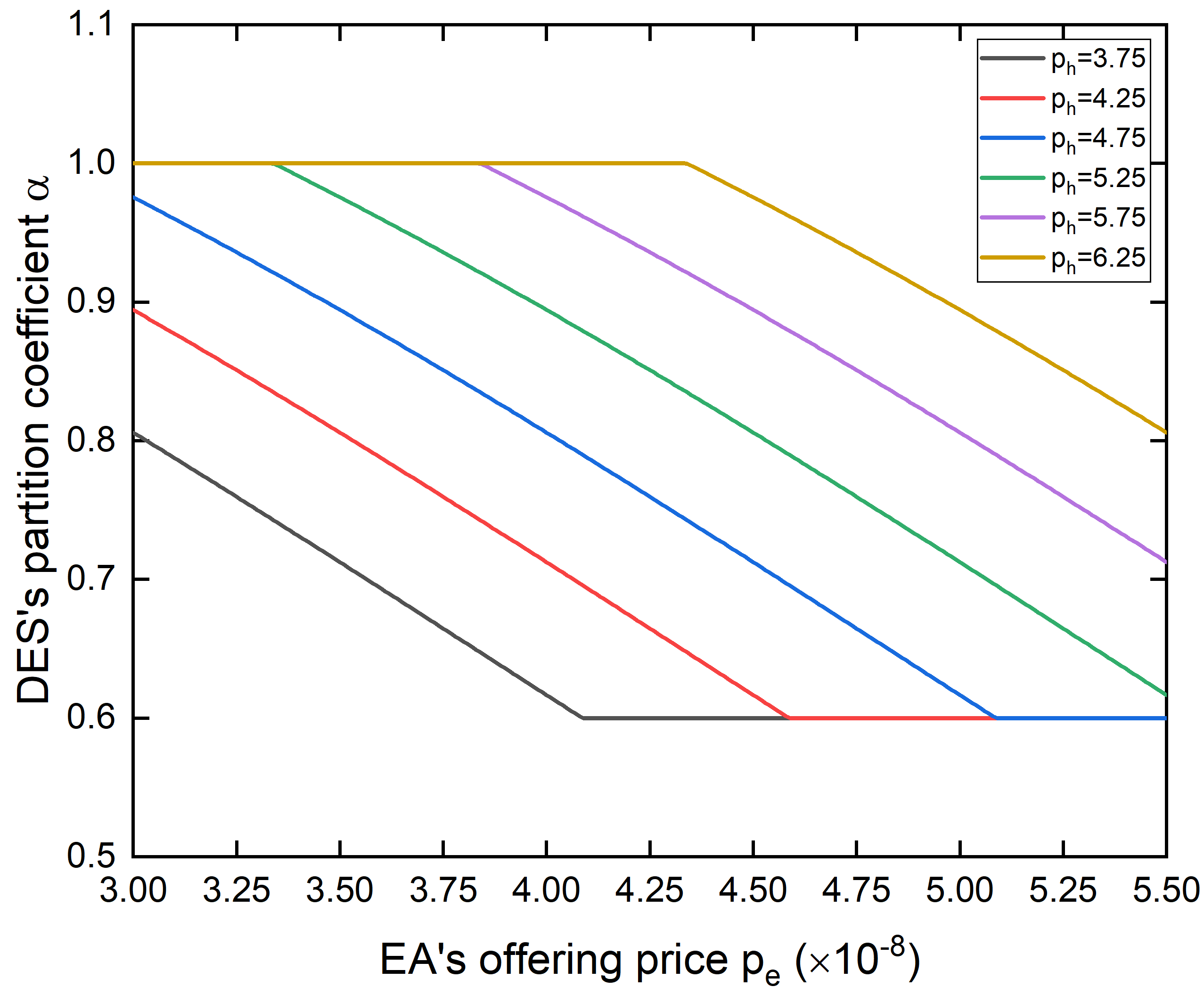}
		%\caption{fig1}
	}%
	
	\subfigure[DES's optimal response $\beta$]{
		\centering
		\includegraphics[width=0.48\linewidth]{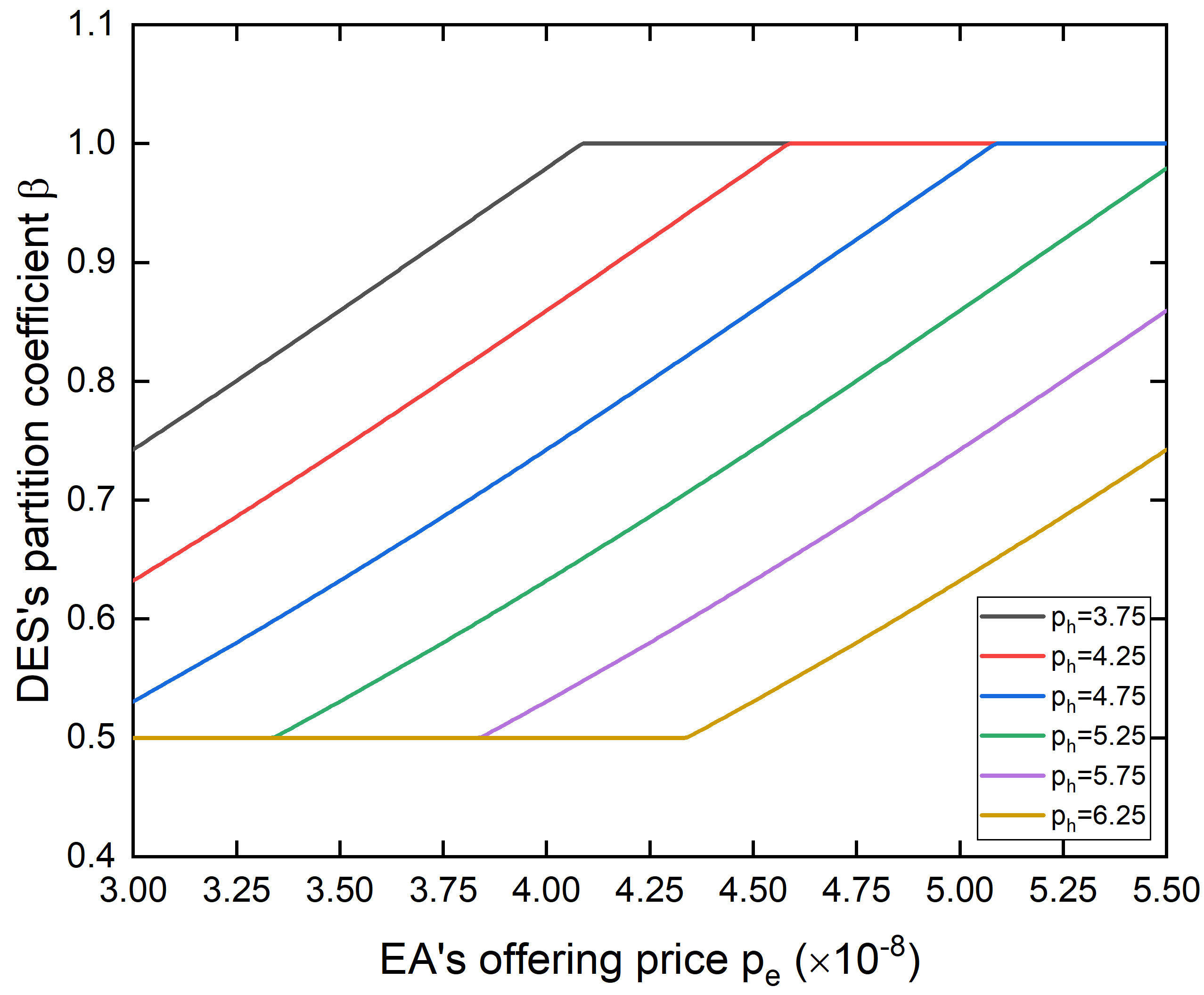}
		%\caption{fig2}
	}%
	\subfigure[HA's profit function $V_h(\cdot,p_e)$]{
		\includegraphics[width=0.48\linewidth]{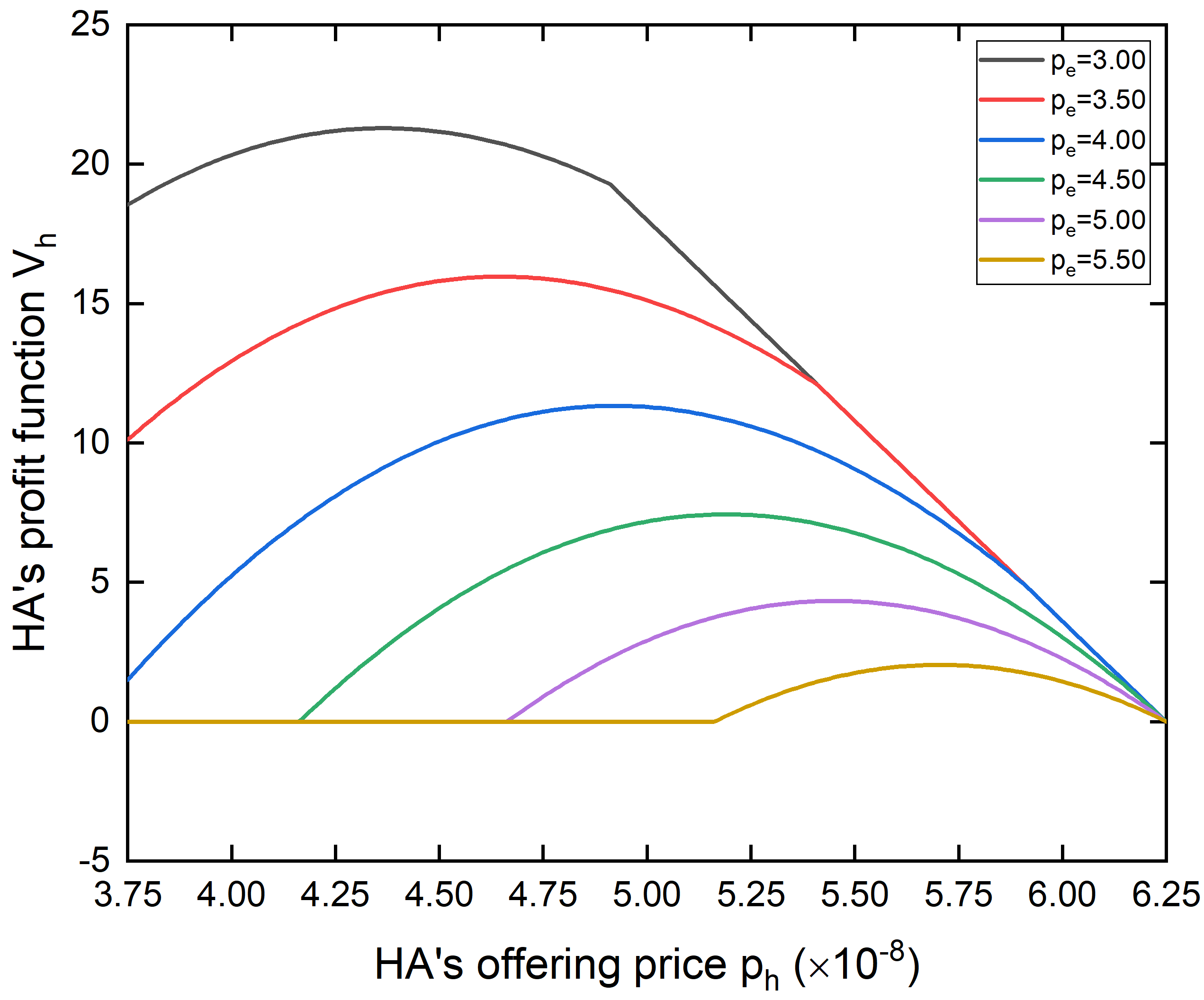}
		%\caption{fig1}
	}%
	
	\subfigure[DES's optimal response $\alpha$]{
		\centering
		\includegraphics[width=0.48\linewidth]{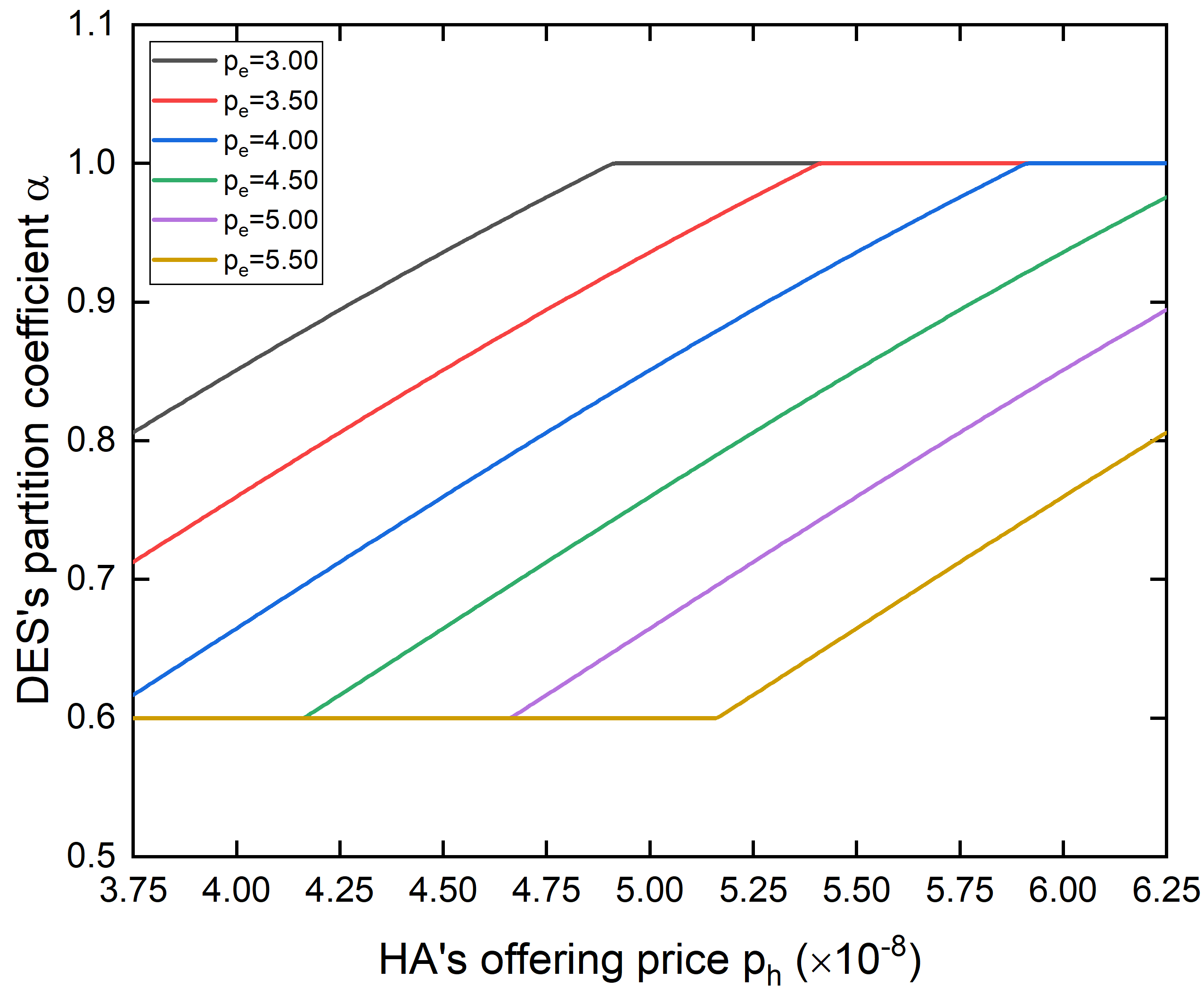}
		%\caption{fig2}
	}%
	\subfigure[DES's optimal response $\beta$]{
		\includegraphics[width=0.48\linewidth]{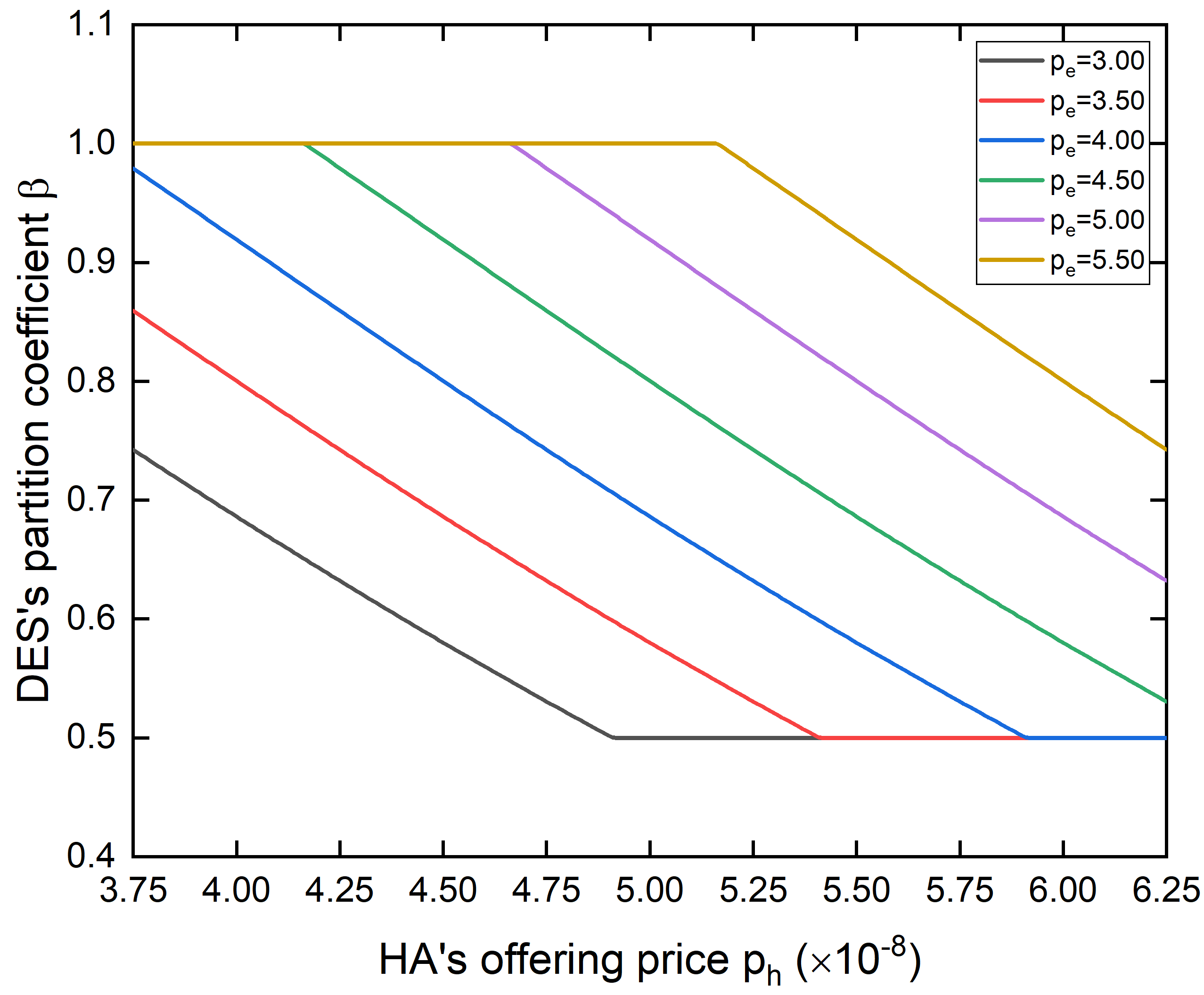}
		%\caption{fig1}
	}%
	\centering
	\caption{The objective functions of aggregators and DES's optimal responses in city ${\rm S}$ under the setting $M_2$.}
	\label{fig6}
\end{figure}

\textit{3) Effect of restrictions: } From the definition of ${\rm OP_{DES}}$, we have a restriction that requires a feasible solution must satisfy $X\cdot\alpha+Y\cdot\beta\geq M_{min}$ and $M_{min}\in(\max\{X,Y\},X+Y)$. Here, we all adopt satisfaction coefficient $k_1$. In this part, we consider two different settings, that is $M_1=0.7\times\max\{X,Y\}+0.3\times(X+Y)$ and $M_2=0.5\times\max\{X,Y\}+0.5\times(X+Y)$ where $M_1<M_2$. In order to demonstrate the effect of restrictions clearly, we use 2D figures instead of 3D figures. Fig. \ref{fig5} and Fig. \ref{fig6} draw the objective functions of aggregators and optimal responses of DES according to aggregators' offered prices in city ${\rm S}$ under the two settings. Let us discuss the typical black curve, shown as (a) with $p_h=3.75\times10^{-8}$ in Fig. \ref{fig5}. At the beginning, EA's profit increases from $p_e=3\times10^{-8}$ to $3.13\times10^{-8}$ where DES's $\alpha$ decreases but $\beta$ keeps constant. This response implies that DES's optimal strategy can be obtained at $(\alpha_\circ,\beta_\circ)$ where the first-order derivative is equal to zero. Then, the middle section is a smooth curve, where DES's response is at the tight border $X\cdot\alpha+Y\cdot\beta=M_1$. In this section, we can see DES's response $\alpha$ decreases linearly and $\beta$ increases linearly. Finally starting from $p_e=4.5\times10^{-8}$, EA's profit decreases linearly because of $\beta=1$. Let us look at the yellow curve, shown as (a) with $p_h=6.25\times10^{-8}$ in Fig. \ref{fig5}. At the beginning, EA's profit is equal to zero since DES responds with $(\alpha=1,\beta=0.3)$. Due to the high price offered by EA and low price offered by EA, all electricity should be partitioned to meet the minimum energy restriction and only sell heat for making revenue. For Fig. \ref{fig6}, compared with Fig. \ref{fig6}, we find that these functions show some structural changes as $M_{min}$ increases. The restriction $M_2$ in Fig. \ref{fig6} is larger than $M_1$ in Fig. \ref{fig5}, which indicates the DES has to use more energy to serve its community. Shown as (a) (d) in Fig. \ref{fig6}, we can know that the DES's optimal strategy cannot be obtained at stationary points $(\alpha_\circ,\beta_\circ)$. All DES's responses are at the tight border $X\cdot\alpha+Y\cdot\beta=M_2$. Besides, the sections of $\alpha=1$ or $\beta=1$ are much larger than that under the restriction $M_1$. However, no matter what $M_{min}$ is, the EA (HA) always needs to offer an increasing price in order to get its maximum profit as the price offered by HA (EA) increases. This reflects the competition between aggregators, which is different from that there is no restriction in Fig. \ref{fig4}. Therefore, the restriction settings in ${\rm OP_{DES}}$ have significant effects on the objective functions of aggregators and optimal responses of DES.

\begin{figure}[!t]
	\centering
	\subfigure[Initialization: $(r_e,r_h)$]{
		\includegraphics[width=0.48\linewidth]{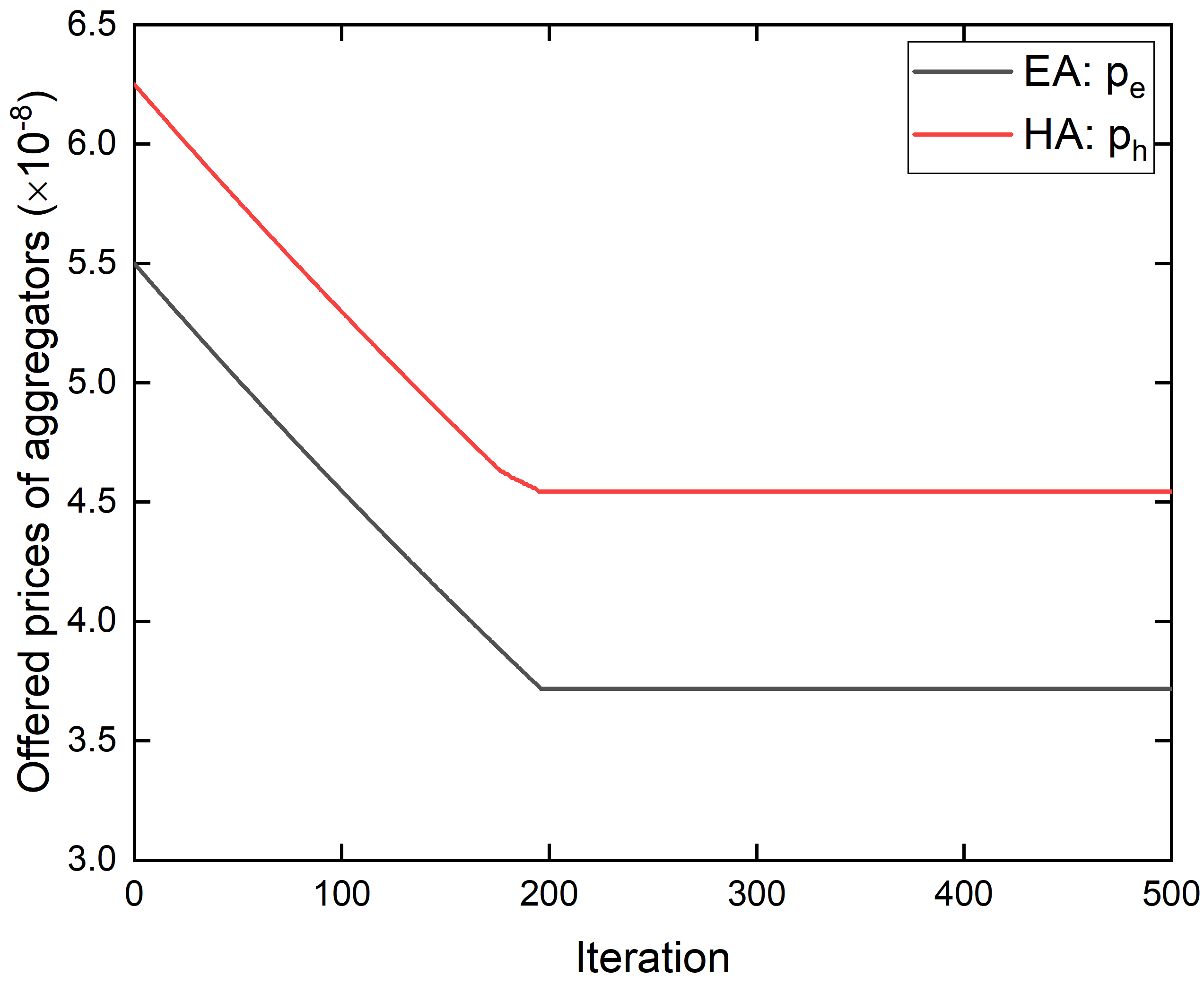}
		%\caption{fig1}
	}%
	\subfigure[Initialization: $(r_e,r_h)$]{
		\includegraphics[width=0.48\linewidth]{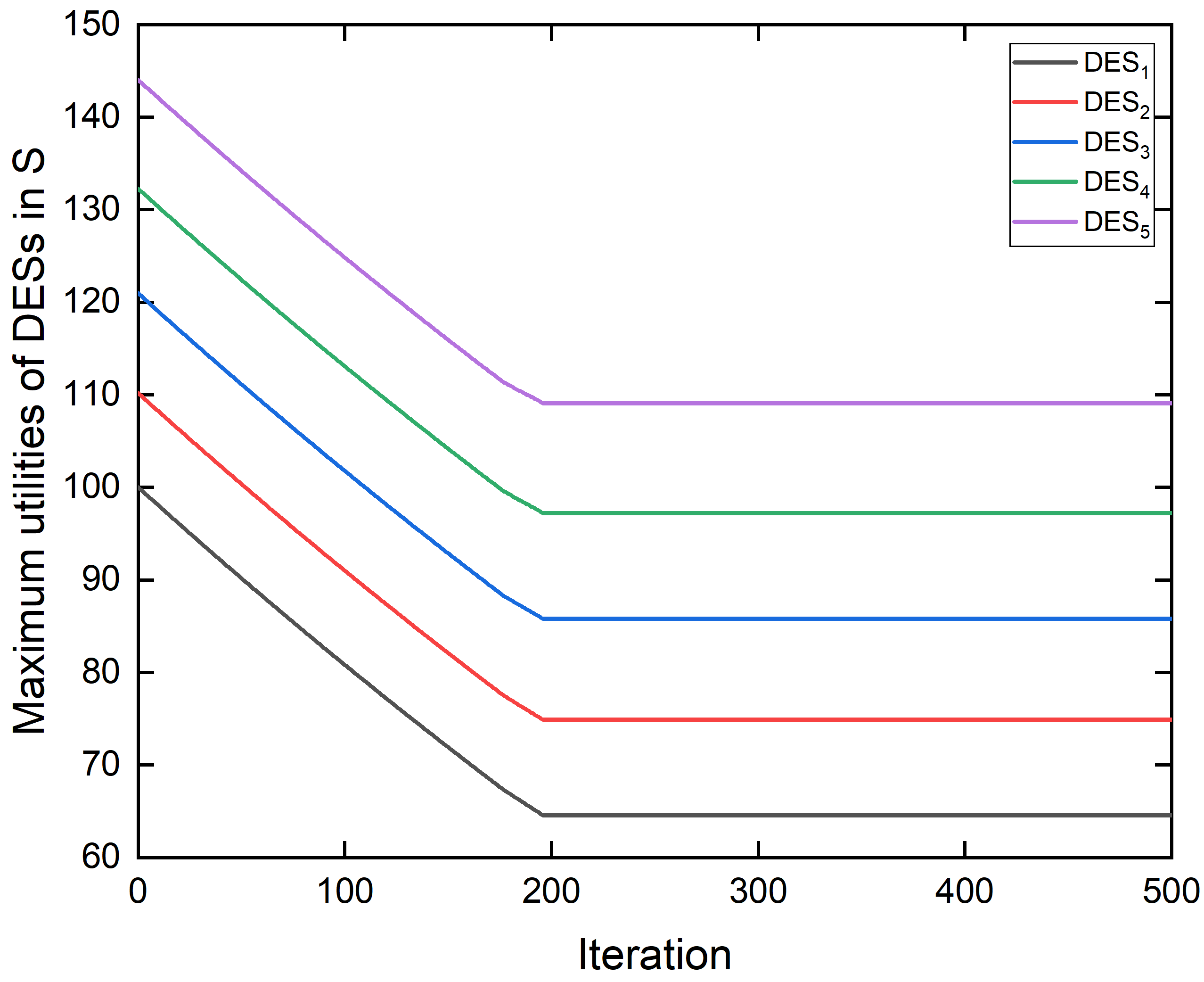}
		%\caption{fig1}
	}%
	
	\subfigure[Initialization: $(c_e,c_h)$]{
		\centering
		\includegraphics[width=0.48\linewidth]{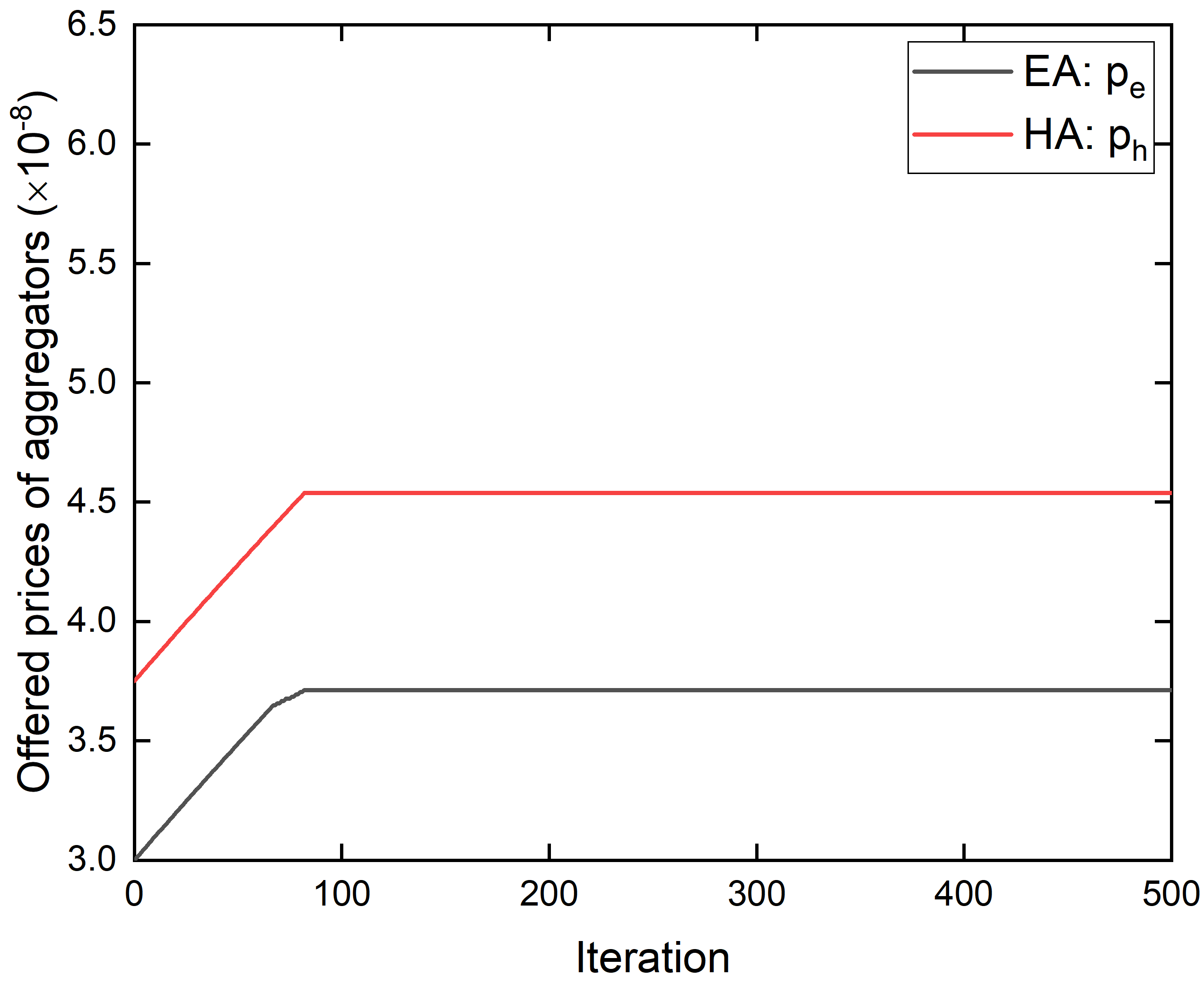}
		%\caption{fig2}
	}%
	\subfigure[Initialization: $(c_e,c_h)$]{
		\includegraphics[width=0.48\linewidth]{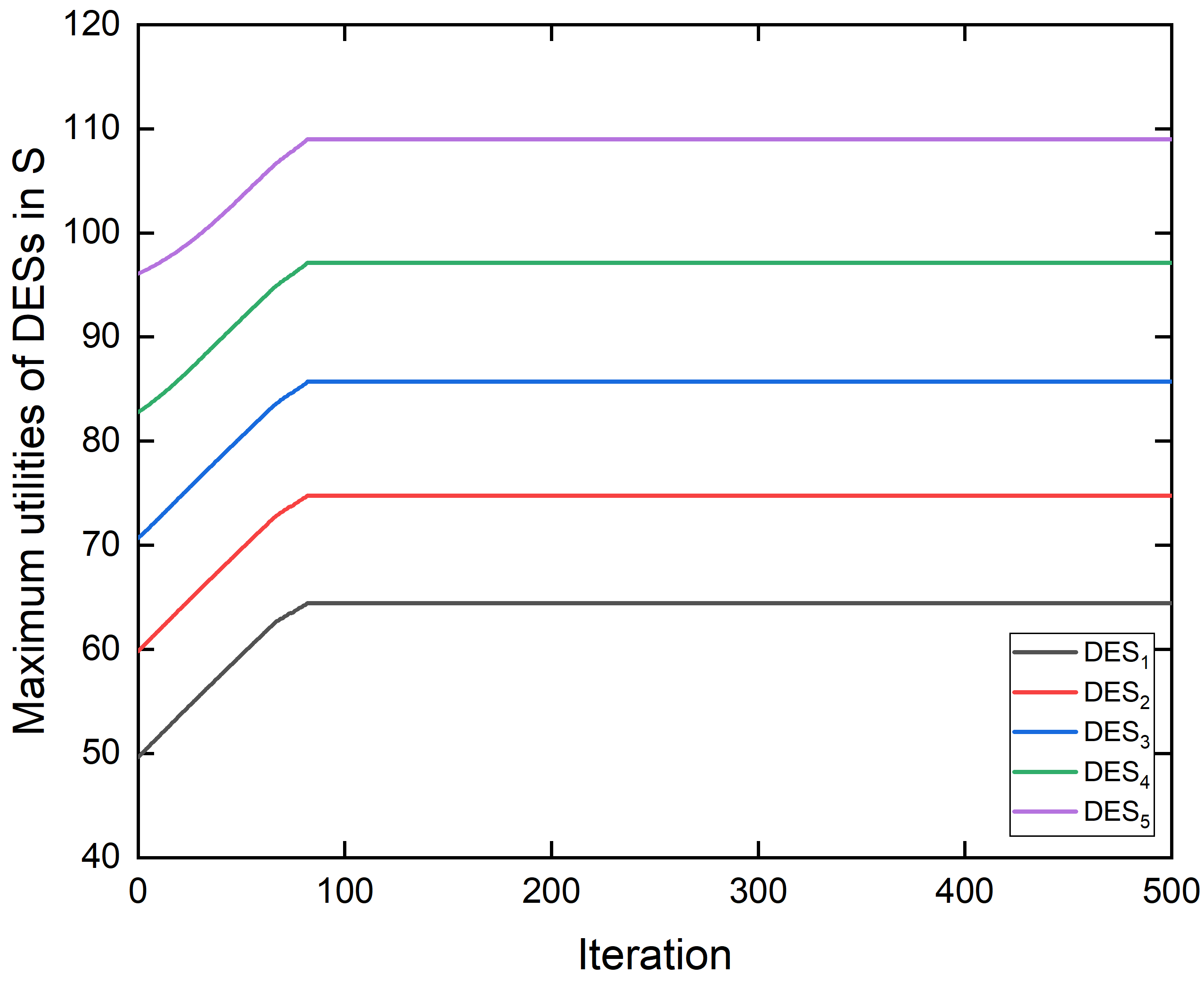}
		%\caption{fig1}
	}%
	
	\subfigure[Initialization: $(\frac{c_e+r_e}{2},\frac{c_h+r_h}{2})$]{
		\centering
		\includegraphics[width=0.48\linewidth]{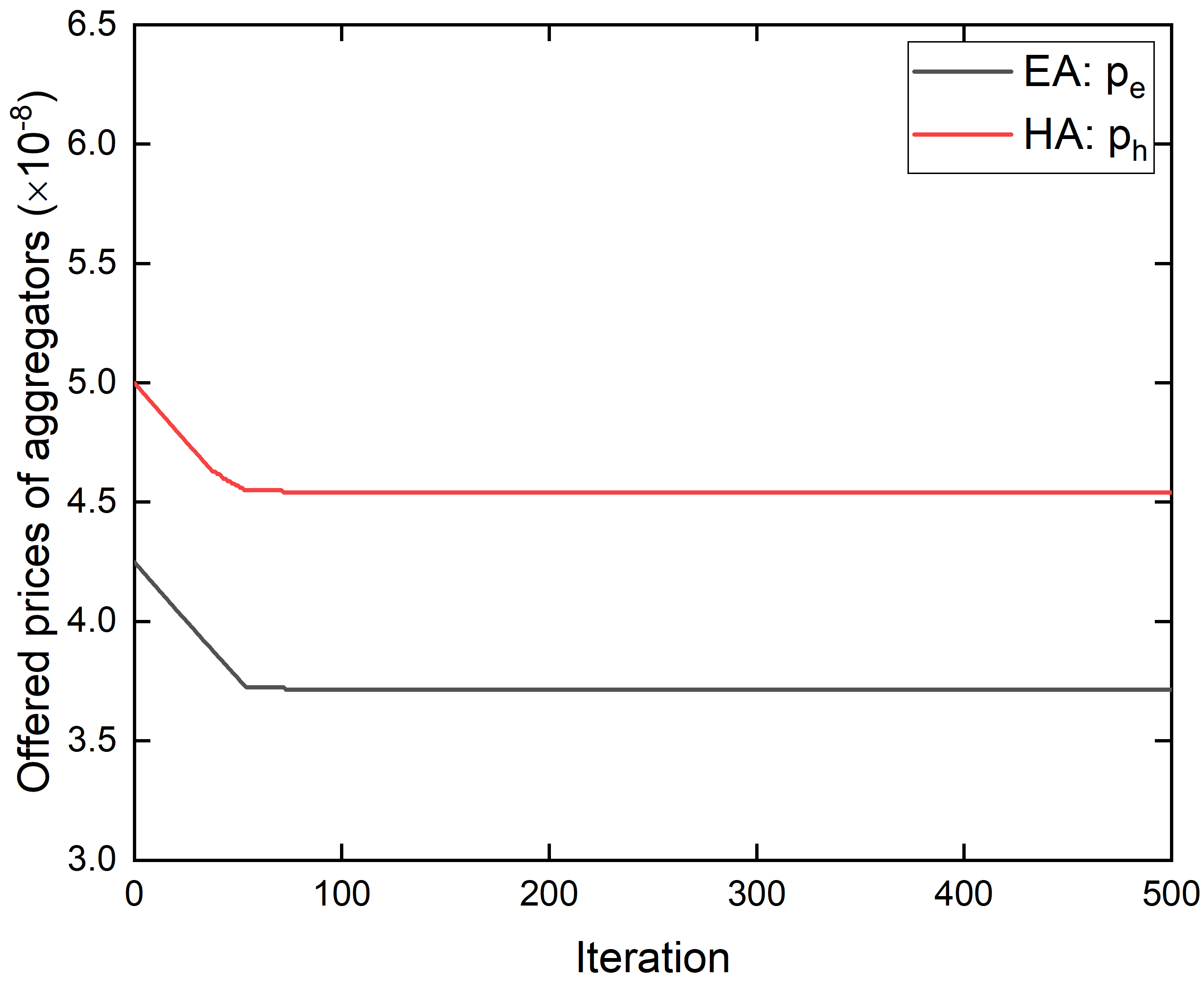}
		%\caption{fig2}
	}%
	\subfigure[Initialization: $(\frac{c_e+r_e}{2},\frac{c_h+r_h}{2})$]{
		\includegraphics[width=0.48\linewidth]{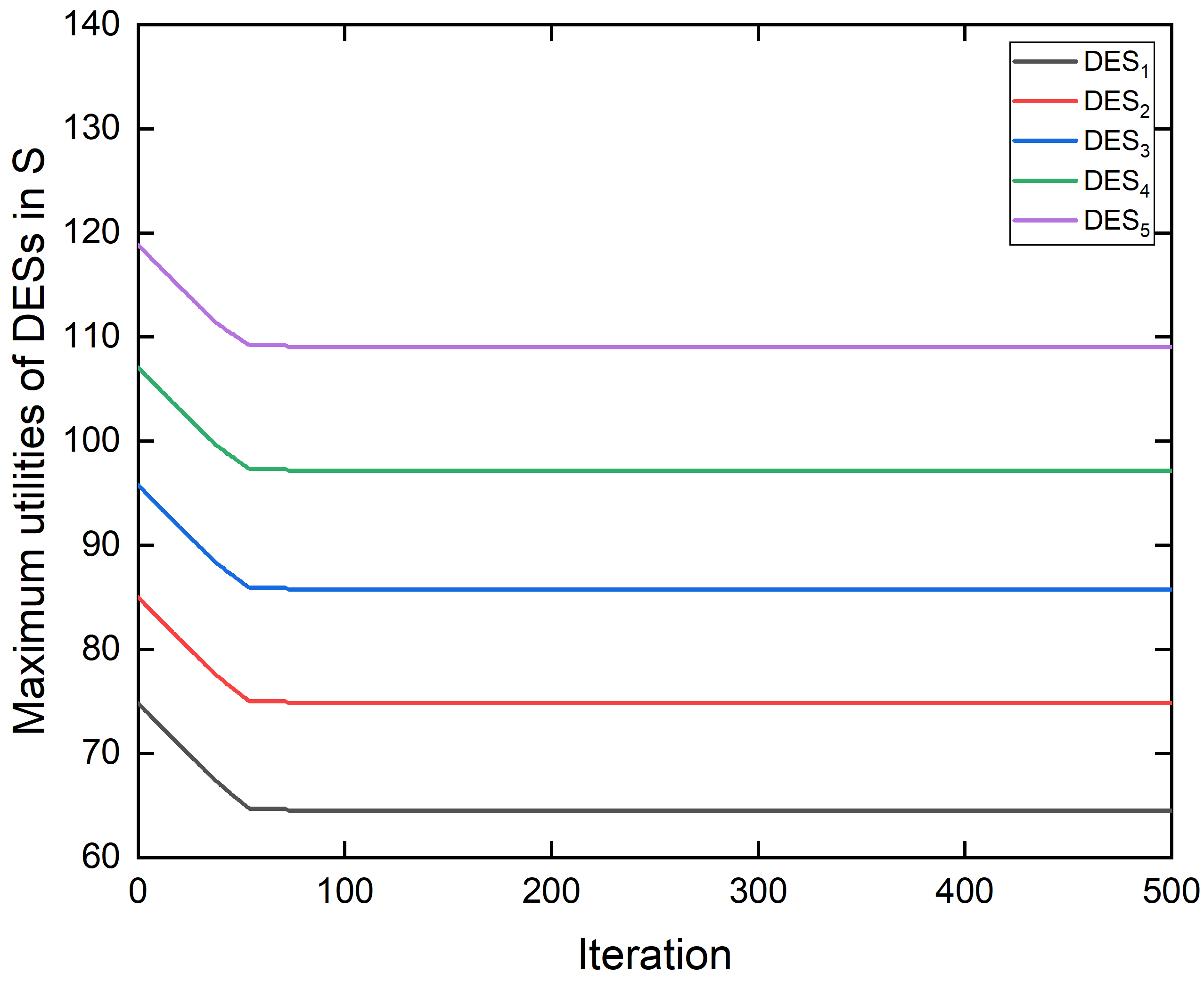}
		%\caption{fig1}
	}%
	\centering
	\caption{The process of converging to Stackelberg equilibrium with different initializations under the restriction $M_1$.}
	\label{fig7}
\end{figure}

\textit{4) Stackelberg Equilibrium: } Consider a city ${\rm S}$ that has five communities, we define these ${\rm DES}_j$'s satisfaction coefficients as $k^j=(k_e^j,k_h^j)$ where $j\in\{1,\cdots,5\}$. We assume $k^1=(115.24, 137.81)$, $k^2=(129.14, 137.81)$, $k^3=(143.04, 137.81)$, $k^4=(156.94, 137.81)$, and $k^5=(170.85, 137.81)$ in this part. Fig. \ref{fig7} draws the process of converging to Stackelberg equilibrium with different initializations under the restriction $M_1$. Here, the parameters defined in Algorithm \ref{a1} is given by $\Delta=1\times10^{-10}$ and $\delta=0.999$. The initialization $(r_e,r_h)$ implies to give  $\{\tilde{p}_e^i,\tilde{p}_h^i\}\leftarrow\{r_e,r_h\}$ in line 3 of Algorithm \ref{a1}. Take (a) (b) in Fig. \ref{fig7} as an example, at the beginning, the aggregators offer the highest prices, thus they hardly gain any profit. By interacting with the five DES, the aggregators decrease their offering prices gradually in each iteration in order to improve profits. At approximately $100$-th iteration, they cannot improve their revenues by changing their strategies unilaterally, thus reaching the Nash Equilibrium. The DESs in ${\rm S}$ always respond aggregators with their optimal strategies, thus the Stackelberg equilibrium can be reached. From (a) (c) (e) in Fig. \ref{fig7}, we can see that they can reach the same equilibrium point regardless of what initialization is. However, the initialization affects the rate of convergence, and a good initialization can converge to the equilibrium point quickly. Fig. \ref{fig8} draws the process of converging to Stackelberg equilibrium with different $\Delta$ under the restriction $M_2$. Here, we adopt the initialization $(c_e, c_h)$ and $\delta=0.999$ as well. From (a) (c) in Fig. \ref{fig8}, they can quickly approach to equilibrium point when we adopt the larger $\Delta$. Nevertheless, it has to wait for $\Delta$ to drop to a relatively low level in order to improve this solution further. Therefore, how to choose the value of $\Delta$ depends on your demand. If we do not require high accuracy but high speed, it is recommended to choose a large $\Delta$; otherwise we should choose a small one.

\begin{figure}[!t]
	\centering
	\subfigure[$\Delta=1\times10^{-9}$]{
		\includegraphics[width=0.48\linewidth]{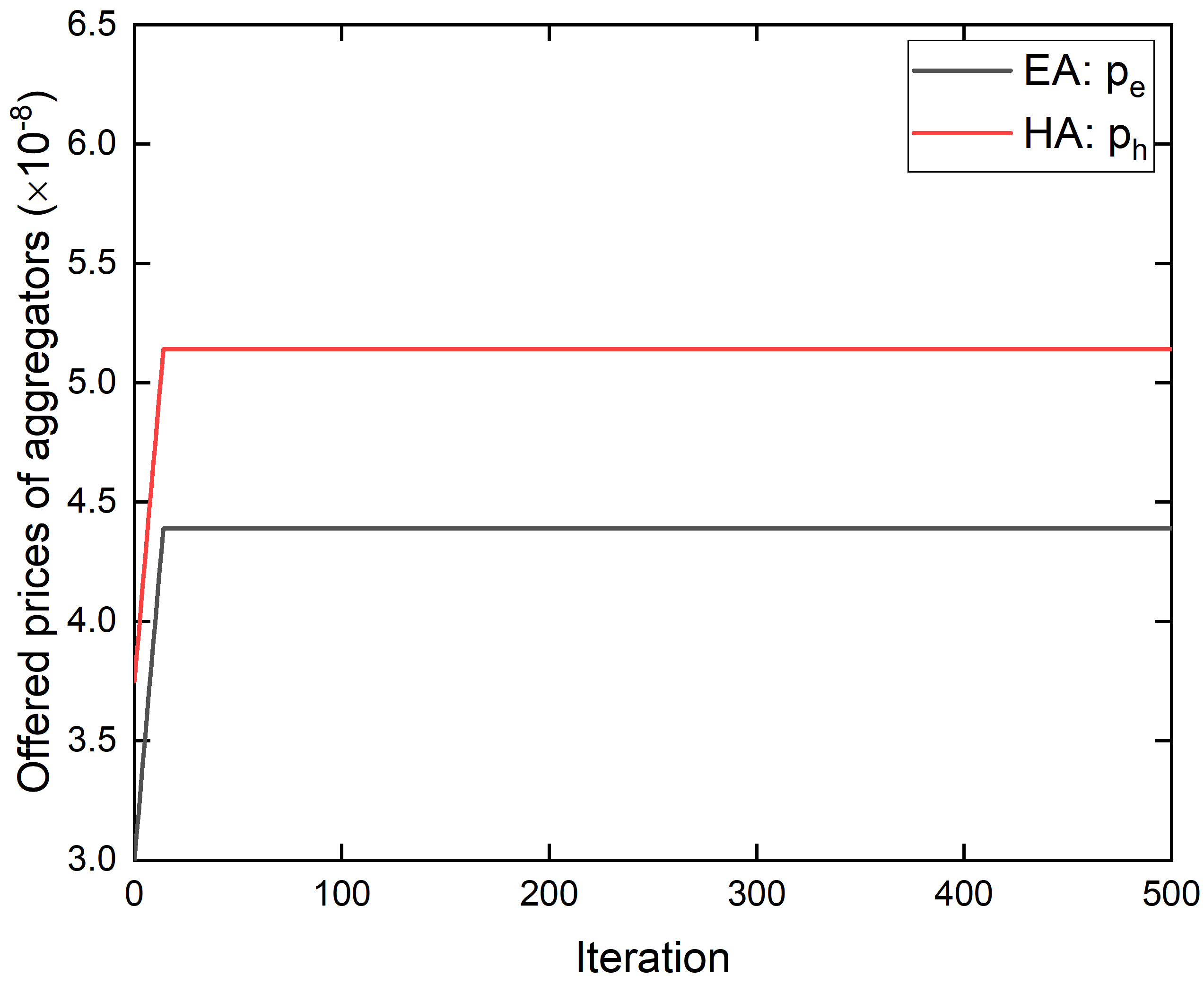}
		%\caption{fig1}
	}%
	\subfigure[$\Delta=1\times10^{-9}$]{
		\includegraphics[width=0.48\linewidth]{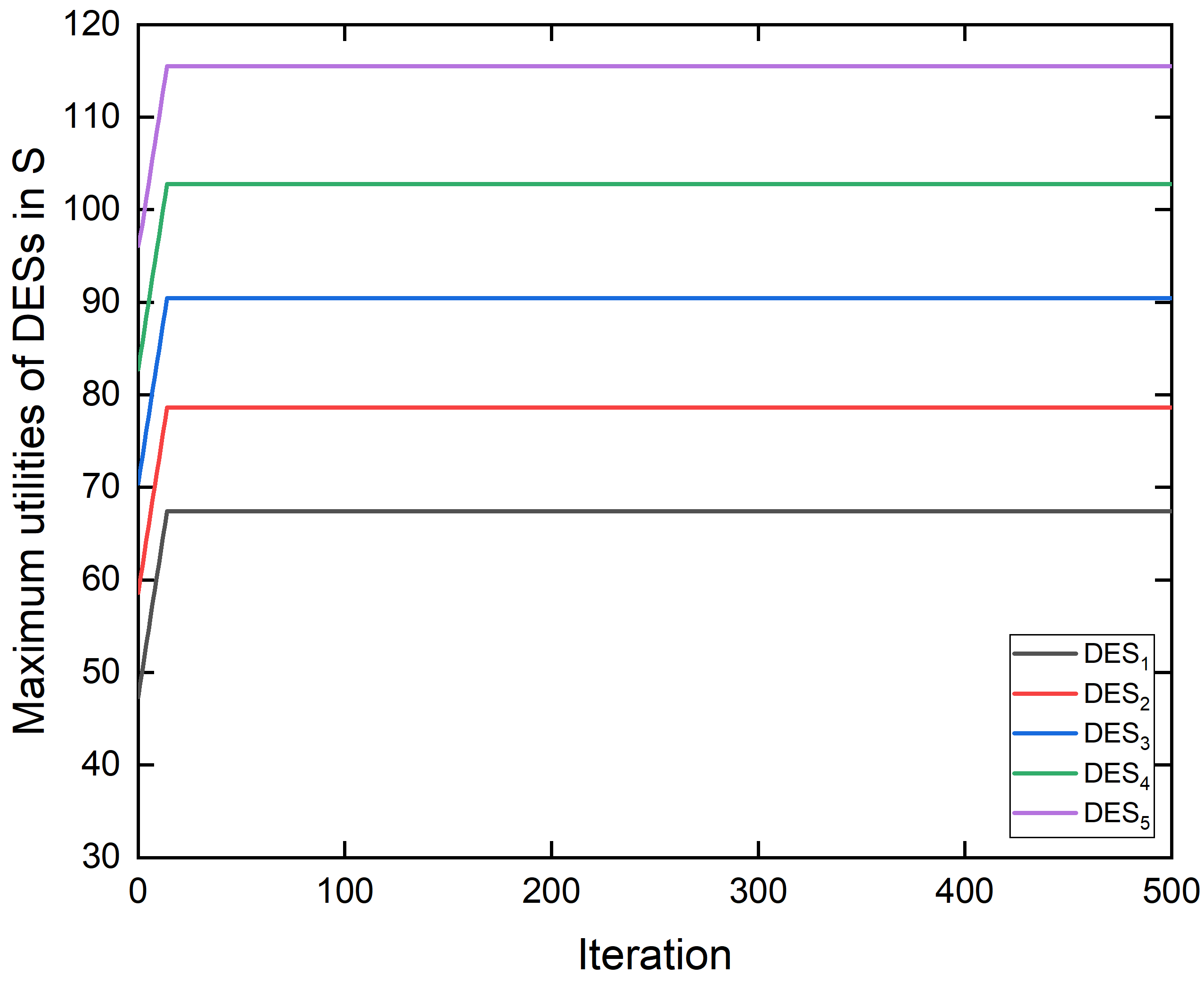}
		%\caption{fig1}
	}%
	
	\subfigure[$\Delta=1\times10^{-10}$]{
		\centering
		\includegraphics[width=0.48\linewidth]{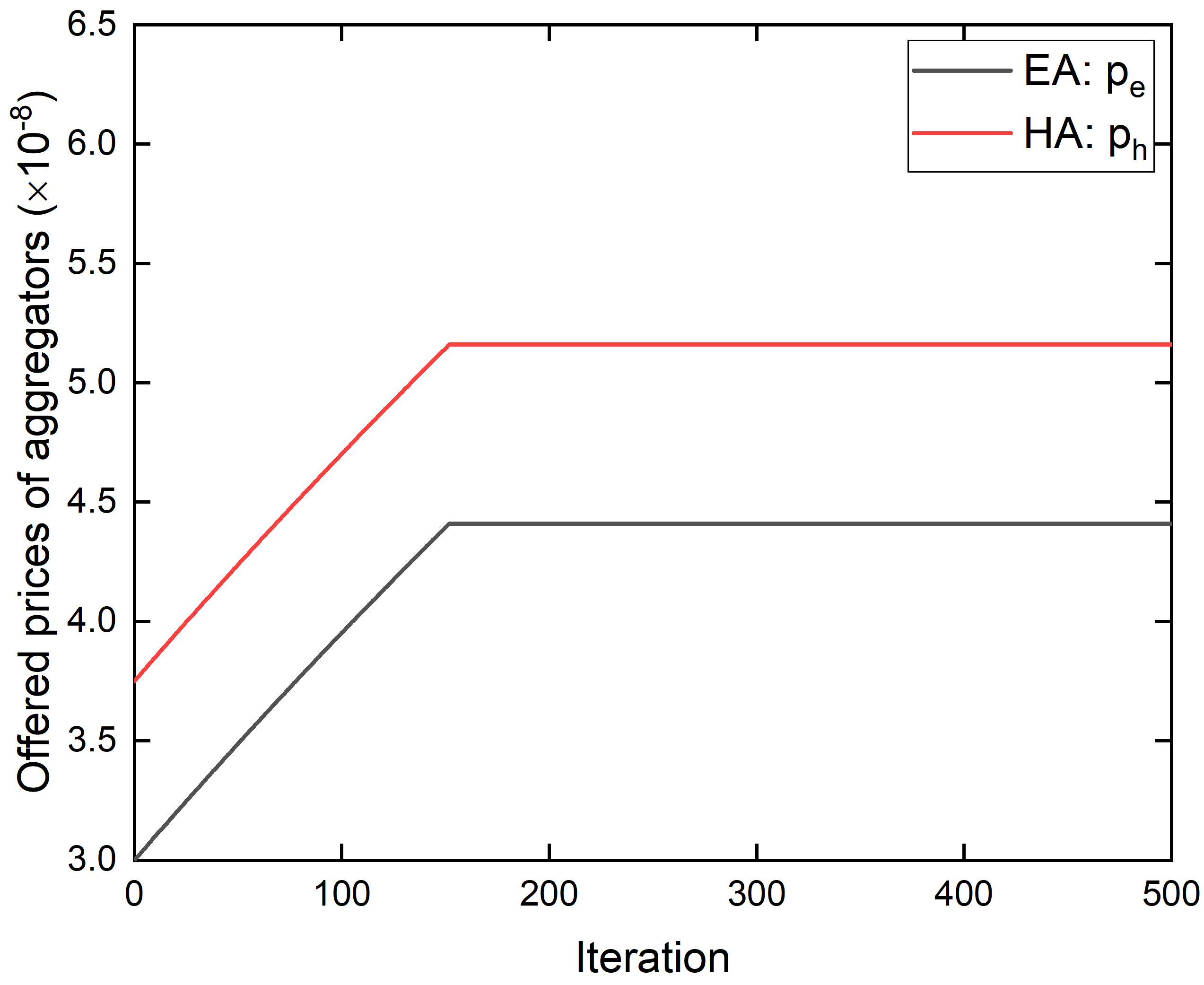}
		%\caption{fig2}
	}%
	\subfigure[$\Delta=1\times10^{-10}$]{
		\includegraphics[width=0.48\linewidth]{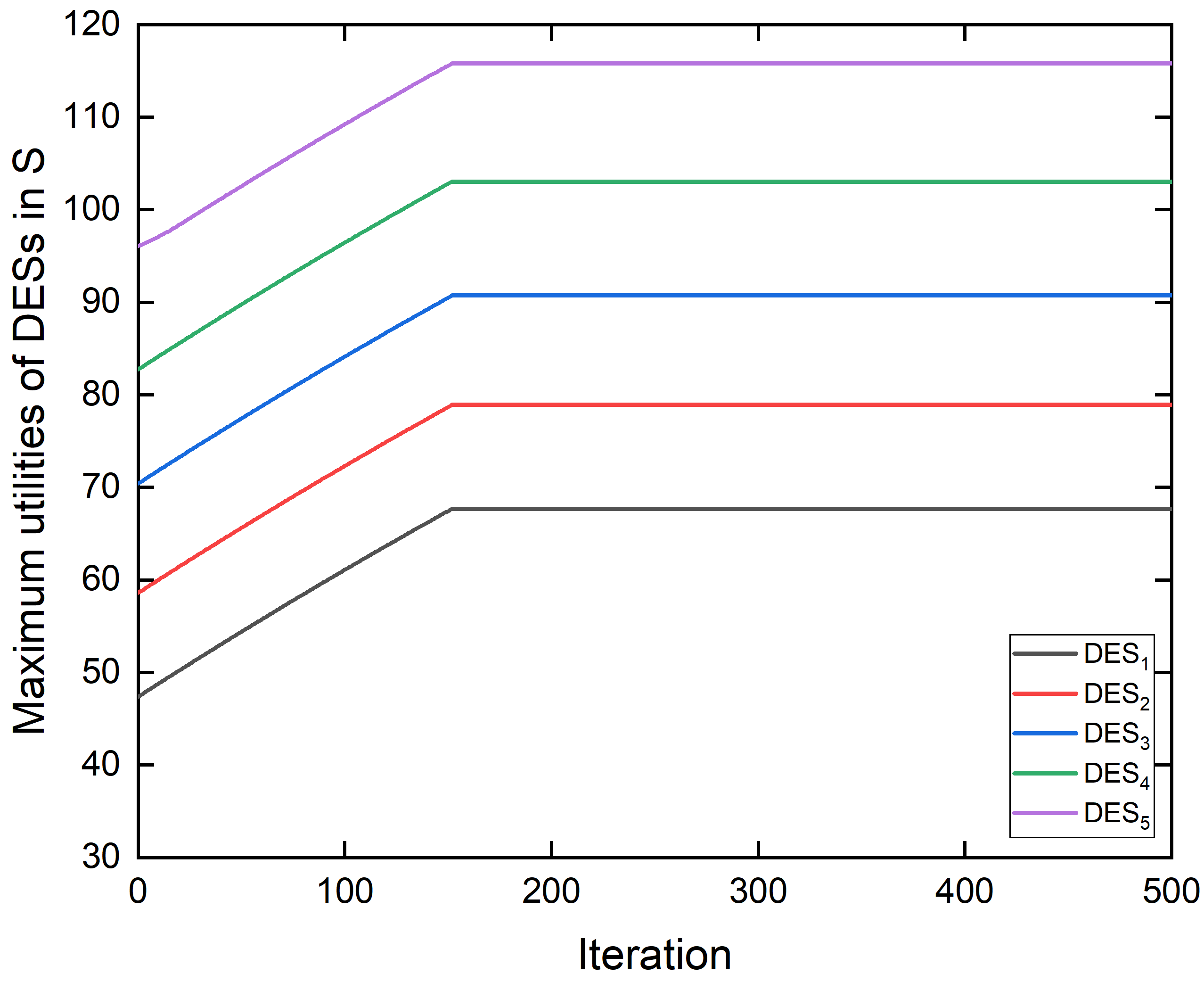}
		%\caption{fig1}
	}%
	
	\centering
	\caption{The process of converging to Stackelberg equilibrium with different $\Delta$ under the restriction $M_2$.}
	\label{fig8}
\end{figure}

\textit{5) Centralized vs. Distributed: }For the aggregators, it is hard to know the complete information about all DESs in its city. Even if knowing partial coefficients, such as coefficient satisfactions, the settings of minimum energy restriction are very flexible, which will change with the fluctuations of the community population, season, and other factors. The optimal responses from DES are unpredictable. Thereby the aggregators can only obtain feedback information of DESs in a distributed manner, that is to update their offering price iteratively by interacting with DESs in their city.

\section{Conclusion}
In this paper, we studied multiple energies trading problem systematically. First, we proposed an architecture of B-MET system to address the security and privacy protection issues in distributed energy trading. In order to reduce latency and improve throughput, we introduce a credit model and design a new byzantine-based consensus mechanism based on it. Then, we model the interactions between aggregators and DESs in a smart city by MLMF Stackelberg game, which is more complex and realistic than the modes that have appeared before. We solve it step by step, show the existence and uniqueness of SE, and design a sub-gradient algorithm to find NE between aggregators. Finally, the results of numerical simulations indicated that our model is valid, and verify the correctness and efficiency of our algorithm.

\section*{Acknowledgment}

This work is partly supported by National Science Foundation under grant 1747818 and 1907472.

\ifCLASSOPTIONcaptionsoff
  \newpage
\fi

\bibliographystyle{IEEEtran}
\bibliography{references}

\end{document}